\DeclareMathSymbol{:}{\mathpunct}{operators}{"3A}
\tikzset{>=stealth', decoration={snake,amplitude=.3mm,segment length=2mm,post length=1mm}}
\DeclareMathOperator\tsbottom{bottom}
\DeclareMathOperator\dom{dom}
\DeclareMathOperator\down{down}
\DeclareMathOperator\equals{equals}
\DeclareMathOperator\id{id}
\DeclareMathOperator\push{push}
\DeclareMathOperator\set{set}
\DeclareMathOperator\sort{sort}
\DeclareMathOperator\stay{stay}
\DeclareMathOperator\type{type}
\DeclareMathOperator\up{up}
\newcommand{\TS}[1]{\mathrm{TS}(#1)}
\newcommand{\repeatcaption}[2]{%
  \renewcommand{\thefigure}{\ref{#1}}%
  \captionsetup{list=no}%
  \caption{#2, repeated from page \pageref{#1}.}%
}
\def\thmt@refnamewithcomma #1#2#3,#4,#5\@nil{%
  \@xa\def\csname\thmt@envname #1utorefname\endcsname{#3}%
  \ifcsname #2refname\endcsname
    \csname #2refname\expandafter\endcsname\expandafter{\thmt@envname}{#3}{#4}%
  \fi
}
\theoremstyle{definition}
\declaretheorem[qed=$\Box$, name=Definition, refname={Definition,Definitions}, numberwithin=section]{definition}
\declaretheorem[qed=$\Box$, name=Construction, refname={Construction,Constructions}, sibling=definition]{construction}
\declaretheorem[qed=$\Box$, name=Example, refname={Example,Examples}, sibling=definition]{example}
\declaretheorem[name=Lemma, refname={Lemma,Lemmas}, sibling=definition]{lemma}
\declaretheorem[name=Theorem, refname={Theorem,Theorems}, sibling=definition]{theorem}
\declaretheorem[name=Proposition, refname={Proposition,Propositions}, sibling=definition]{proposition}
\title{\texorpdfstring{An automata characterisation for \\ multiple context-free languages\footnote{This is an extended version of a paper with the same title accepted at DLT 2016 which is available at \url{link.springer.com} and via the DOI \href{https://dx.doi.org/10.1007/978-3-662-53132-7_12}{\nolinkurl{10.1007/978-3-662-53132-7_12}}.}}{An automata characterisation for multiple context-free languages}}
\author{%
  Tobias Denkinger \\[.5em]
  \normalsize Faculty of Computer Science, \\
  \normalsize Technische Universität Dresden, \\%
  \normalsize 01062 Dresden, Germany \\%
  \normalsize \href{mailto:tobias.denkinger@tu-dresden.de}{\nolinkurl{tobias.denkinger@tu-dresden.de}}%
}
\begin{document}

\maketitle
\begin{abstract}
	We introduce tree stack automata as a new class of automata with storage and identify a restricted form of tree stack automata that recognises exactly the multiple context-free languages.
\end{abstract}


\tableofcontents
\clearpage

\section{Introduction}

Prominent classes of languages are often defined with the help of their generating mechanism, e.g.
	context-free languages are defined via context-free grammars,
	tree-adjoining languages via tree-adjoining grammars, and 
	indexed languages via indexed grammars. 
To achieve a better understanding of how languages from a specific language class can be recognised, it is natural to ask for an automaton model.
For context-free languages, this question is answered with pushdown automata~\cite{Cho62,Sch63}, yield languages of tree-adjoining grammars are recognised by embedded pushdown automata \cite[Sec.~3]{Vij88}, and indexed languages are recognised by nested stack automata~\cite{Aho69}.

Mildly context-sensitive grammars are currently prominent in natural language processing as they are able to express the non-projective constituents and dependencies that occur in natural languages~\cite{KuhSat09,Mai10}.
Multiple context-free grammars~\cite{SekMatFujKas91} describe many mildly context-sensitive grammars.
Yet, to the author's knowledge, there is no corresponding automaton model.
Thread automata \cite{Vil02,Vil02a}, introduced by Villemonte de la Clergerie to describe parsing strategies for mildly context-sensitive grammar formalisms, already come close to such an automaton model.
A construction of thread automata from ordered simple range concatenation grammars (which are equivalent to multiple context-free languages) was given \cite[Sec.~4]{Vil02}.
A construction for the converse direction as well as proofs of correctness, however, were not provided.

Based on the idea of thread automata, we introduce a new automaton model, tree stack automata, and formalise it using automata with storage~\cite{Sco67,Eng14} in the notation of Herrmann and Vogler~\cite{HerVog15}, see~\cref{sec:tree_stack_operator}.
Tree stack automata possess, in addition to the usual finite state control, the ability to manipulate a tree-shaped stack that has the tree's root at its bottom.
We find a restriction of tree stack automata that makes them equivalent to multiple context-free grammars and we give a constructive proof for this equivalence, see~\cref{sec:automata_equal_MCFG}.




\section{Preliminaries}
\label{sec:preliminaries}

In this section we fix some notation and briefly recall formalisms used throughout this paper.
%
We denote the set of natural numbers (including 0) by $ℕ$, $ℕ ∖ \{0\}$ by $ℕ_+$, and $\{1, …, n\}$ by $[n]$ for every $n ∈ ℕ$.
%
%
The reflexive, transitive closure of some endorelation $r$ is denoted as $r^*$.
%
For two sets $A$ and $B$, we denote the set of partial functions from $A$ to $B$ by $A → B$.
The operator $→$ shall be right associative.
Let $f: A → B$, $a ∈ A$, and $b ∈ B$.
The \emph{domain of $f$}, denoted by $\dom(f)$, is the subset of $A$ for which $f$ is defined.
If $\dom(f) = A$ we call $f$ \emph{total}.
We define $f[a ↦ b]$ as the partial function from $A$ to $B$ such that $f[a ↦ b](a) = b$ and $f[a ↦ b](a') = f(a')$ for every $a' ∈ \dom(f) ∖ \{a\}$.
We sometimes construe partial functions as relations in the usual manner.
%
Let $S$ be a countable set (of \emph{sorts}) and $s ∈ S$.
An \emph{$S$-sorted set} is a tuple $(B, \sort)$ where $B$ is a set and $\sort\colon B → S$ is total.
We denote the preimage of $s$ under $\sort$ by $B_s$ and abbreviate $(B, \sort)$ by $B$; $\sort$ will always be clear from the context.
%
Let $A$ be a set and $L ⊆ A^*$.
We call $L$ \emph{prefix-closed} if for every $w ∈ A^*$ and $a ∈ A$ we have that $wa ∈ L$ implies $w ∈ L$.
%
An \emph{alphabet} is a finite set (of \emph{symbols}).
Let $Γ$ be an alphabet.
The set of \emph{trees over $Γ$}, denoted by $\mathrm{T}_Γ$, is the set of partial functions from $ℕ_+^*$ to $Γ$ with finite and prefix-closed domain.
The usual definition of trees \cite[Sec.~2]{Gue83} additionally requires that for every $ρ ∈ ℕ_+^*$ and $n ≥ 2$: if $ρn$ is in the domain of a tree then $ρ(n-1)$ is as well; we drop this restriction here.

\subsection{Parallel multiple context-free grammars}

We fix a set $X = \{ x_i^j ∣ i, j ∈ ℕ_+ \}$ of \emph{variables}.
Let $Σ$ be an alphabet.
The set of \emph{composition representations over $Σ$} is the $(ℕ_+^* × ℕ_+)$-sorted set $\mathrm{RF}_Σ$ where for every $s_1, …, s_{\ell}, s ∈ ℕ_+$ we define
	$X_{(s_1 ⋯ s_{\ell}, s)} = \{x_i^j ∣ i ∈ [\ell], j ∈ [s_i] \} ⊆ X$ and
	\(
    (\mathrm{RF}_Σ)_{(s_1 ⋯ s_{\ell}, s)} = \{ [u_1, …, u_s]_{(s_1 ⋯ s_{\ell}, s)} ∣ u_1, …, u_s ∈ (Σ ∪ X_{(s_1 ⋯ s_{\ell}, s)})^* \}
  \) as a set of strings in which parentheses, brackets, commas, and the elements of $ℕ_+$, $Σ$, and $X_{(s_1⋯s_{\ell}, s)}$ are used as symbols.
Let $f = [u_1, …, u_s]_{(s_1 ⋯ s_{\ell}, s)} ∈ \mathrm{RF}_Σ$.
The \emph{composition function of $f$}, also denoted by $f$, is the function from $(Σ^*)^{s_1} × ⋯ × (Σ^*)^{s_{\ell}}$ to $(Σ^*)^s$ such that
\(
	f((w_1^1, …, w_1^{s_1}), …, (w_{\ell}^1, …, w_{\ell}^{s_{\ell}}))
		= (u_1', …, u_s')
\)
where $(u_1', …, u_s')$ is obtained from $(u_1, …, u_s)$ by replacing each occurrence of $x_i^j$ by $w_i^j$ for every $i ∈ [\ell]$ and $j ∈ [s_{\ell}]$.
The set of all composition functions for some composition representation over $Σ$ is denoted by $\mathrm{F}_Σ$.
From here on we no longer distinguish between composition representations and composition functions.
We define the \emph{fan-out of $f$} as $s$.
We call $f$ \emph{linear} (\emph{non-deleting}) if in $u_1 ⋯ u_s$ every element of $X$ occurs at most once (at least once, respectively).
The subscript is dropped from $f$ if its sort is clear from the context.

\begin{definition}
	A \emph{parallel multiple context-free grammar (short: PMCFG)} is a tuple
	\(
		G = (N, Σ, I, R)
	\) where
		$N$ is a finite $ℕ_+$-sorted set (of \emph{non-terminals}),
		$Σ$ is an alphabet (of \emph{terminals}),
		$I ⊆ N_1$ (\emph{initial non-terminals}), and
		$R ⊆ ⋃_{k, s, s_1, …, s_k ∈ ℕ} N_s × (\mathrm{F}_Σ)_{(s_1⋯s_k, s)} × (N_{s_1} × ⋯ × N_{s_k})$ is finite (\emph{rules}). 
\end{definition}

Let $G = (N, Σ, I, R)$ be a PMCFG.
A rule $(A, f, A_1⋯A_k) ∈ R$ is usually written as $A → f(A_1, …, A_k)$; it inherits its sort from $f$.
A PMCFG that only contains rules with linear composition functions is called a \emph{multiple context-free grammar (short: MCFG)}.
An MCFG that contains only rules of fan-out at most $k$ is called a $k$-MCFG.

For every $A ∈ N$, we recursively define the \emph{set of derivations in $G$ from $A$} as
\(
	\mathrm{D}_G(A) = \{ r(d_1, …, d_k) ∣ r = A → f(A_1, …, A_k) ∈ R, ∀i ∈ [k]: d_i ∈ \mathrm{D}_G(A_i) \}\text{.}
\)
The elements of $\mathrm{D}_G(A)$ can be construed as trees over $R$.
Let $d ∈ \mathrm{D}_G(A)$.
By projecting each rule in $d$ on its second component, we obtain a term over $\mathrm{F}_Σ$;
the \emph{tuple generated by $d$}, denoted by $⟦d⟧$, is obtained by evaluating this term.
We identify 1-tuples of strings with strings.
The \emph{set of (complete) derivations in $G$} is $\mathrm{D}_G = ⋃_{A ∈ N} \mathrm{D}_G(A)$ ($\mathrm{D}_G^{\text{c}} = ⋃_{S ∈ I} \mathrm{D}_G(S)$, respectively).
The \emph{language of $G$} is $L(G) = \{ ⟦d⟧ ∣ d ∈ \mathrm{D}_G^{\text{c}} \}$.

\subsection{Automata with storage}


\begin{definition}
	A \emph{storage type} is a tuple $S = (C, P, F, C_{\text{i}})$ where
		$C$ is a set (of \emph{storage configurations}),
		$P ⊆ \mathcal{P}(C)$ (\emph{predicates}),
		$F ⊆ C → C$ (\emph{instructions}), and
		$C_{\text{i}} ⊆ C$ (\emph{initial configurations}).
\end{definition}

\begin{definition}
	An \emph{automaton with storage} is a tuple $ℳ = (Q, S, Σ, q_{\text{i}}, c_{\text{i}}, δ, Q_{\text{f}})$ where
		$Q$ is a finite set (of \emph{states}),
		$S = (C, P, F, C_{\text{i}})$ is a storage type,
		$Σ$ is an alphabet (of \emph{terminals}),
		$q_{\text{i}} ∈ Q$ (\emph{initial state}),
		$c_{\text{i}} ∈ C_{\text{i}}$ (\emph{initial storage configuration}),
		$δ ⊆ Q × (Σ ∪ \{ε\}) × P × F × Q$ is finite (\emph{transitions}), and
		$Q_{\text{f}} ⊆ Q$ (\emph{final states}). 
\end{definition}

Let $ℳ = (Q, S, Σ, q_{\text{i}}, c_{\text{i}}, δ, Q_{\text{f}})$ be an automaton with storage and $S = (C, P, F, C_{\text{i}})$.
Let $τ = (q, ω, p, f, q') ∈ δ$ be a transition.
We call $q$ the \emph{source state of $τ$}, $p$ the \emph{predicate of $τ$}, $f$ the \emph{instruction of $τ$}, and $q'$ the \emph{target state of $τ$}.
A \emph{configuration of $ℳ$} is an tuple $(q, c, w)$ where $q ∈ Q$, $c ∈ C$, and $w ∈ Σ^*$.
We define the \emph{run relation with respect to $τ$} as the binary relation $⊢_τ$ on the set of configurations of $ℳ$ such that
\( (q, c, w) ⊢_τ (q', c', w') \) iff \( (w = ωw') ∧ (c ∈ p) ∧ (f(c) = c') \).
The \emph{set of runs in $ℳ$} is the smallest set $\mathrm{R}_ℳ ⊆ δ^*$ where for every $k ∈ ℕ$ and $τ_1, …, τ_k ∈ δ$, the string $θ = τ_1⋯τ_k$ is in $\mathrm{R}_ℳ$ if there are $q_0, …, q_k ∈ Q$, $c_0, …, c_k ∈ C$, and $ω_1, …, ω_k ∈ Σ ∪ \{ε\}$ such that
\( (q_0, c_0, ω_1⋯ω_k) ⊢_{τ_1} (q_1, c_1, ω_2⋯ω_k) ⊢_{τ_2} … ⊢_{τ_k} (q_k, c_k, ε) \);
we may then write $(q_0, c_0, ω_1⋯ω_k) ⊢_θ (q_k, c_k, ε)$ or $(q_0, c_0) ⊢_θ (q_k, c_k)$ or $⟦θ⟧ = ω_1⋯ω_k$.
The \emph{set of valid runs in $ℳ$}, denoted by $\mathrm{R}_ℳ^{\text{v}}$, contains exactly the runs $θ ∈ \mathrm{R}_ℳ$ where $(q_{\text{i}}, c_{\text{i}}) ⊢_θ (q, c)$ for some $q ∈ Q_{\text{f}}$ and $c ∈ C$.
For $θ ∈ \mathrm{R}_ℳ^{\text{v}}$ we say that $ℳ$ \emph{recognises} $⟦θ⟧$.
The \emph{language of $ℳ$} is $L(ℳ) = \{ ⟦θ⟧ ∣ θ ∈ \mathrm{R}_ℳ^{\text{v}} \}$.

%


\section{Tree stack automata}
\label{sec:tree_stack_operator}

Informally, a tree stack is a tree with a designated position in it.
The root of the tree serves as \emph{bottom-most symbol} and the leaves are \emph{top-most symbols}.
We allow the stack pointer to move \emph{downward} (i.e. to the parent) and \emph{upward} (i.e. to any child).
We may \emph{write} at any position except for the root.
We may also \emph{push} a symbol to any vacant child position of the current node.
Formally, for an alphabet $Γ$, a \emph{tree stack over $Γ$} is a tuple $(ξ[ε ↦ @], ρ)$ where $ξ ∈ \mathrm{T}_Γ$, $@ ∉ Γ$, and $ρ ∈ \dom(ξ) ∪ \{ε\}$.
The set of all tree stacks over $Γ$ is denoted by $\TS{Γ}$.
We define the following subsets (or predicates) of and partial functions on $\TS{Γ}$:
\begin{itemize}
	\item $\equals(γ) = \{ (ξ, ρ) ∈ \TS{Γ} ∣ ξ(ρ) = γ \}$ for every $γ ∈ Γ$ and
	\item $\tsbottom = \{ (ξ, ρ) ∈ \TS{Γ} ∣ ρ = ε \}$.
	\item $\id: \TS{Γ} → \TS{Γ}$ where $\id(ξ, ρ) = (ξ, ρ)$ for every $(ξ, ρ) ∈ \TS{Γ}$,
	\item $\push: ℕ_+ → Γ → \TS{Γ} → \TS{Γ}$ where $\push_n(γ)(ξ, ρ) = (ξ[ρn ↦ γ], ρn)$ for every $(ξ, ρ) ∈ \TS{Γ}$, $n ∈ ℕ_+$ with $ρn ∉ \dom(ξ)$, and $γ ∈ Γ$,
	\item $\up: ℕ_+ → \TS{Γ} → \TS{Γ}$ where $\up_n(ξ, ρ) = (ξ, ρn)$ for every $(ξ, ρ) ∈ \TS{Γ}$ and $n ∈ ℕ_+$ with $ρn ∈ \dom(ξ)$,
	\item $\down: \TS{Γ} → \TS{Γ}$ where $\down(ξ, ρn) = (ξ, ρ)$ for every $(ξ, ρn) ∈ \TS{Γ}$ with $n ∈ ℕ_+$, and
	\item $\set: Γ → \TS{Γ} → \TS{Γ}$ where $\set(γ)(ξ, ρ) = (ξ[ρ ↦ γ], ρ)$ for every $γ ∈ Γ$ and $(ξ, ρ) ∈ \TS{Γ}$ with $ρ ≠ ε$.
\end{itemize}
We may denote a tree stack $(ξ, ρ) ∈ \TS{Γ}$ by writing $ξ$ as a set and underlining the unique tuple of the form $(ρ, γ)$ in this set.
Consider for example a tree $ξ ∈ \mathrm{T}_{\{@, *, \#\}}$ with domain $\{ε, 2, 23\}$ such that $ξ: ε ↦ @,\; 2 ↦ *,\; 23 ↦ \#$.
We would then denote the tree stack $(ξ, 2) ∈ \TS{\{*, \#\}}$ by $\{(ε,@),\underline{(2,*)},(23,\#)\}$.

\begin{definition}
	Let $Γ$ be an alphabet.
	The \emph{tree stack storage with respect to $Γ$} is the storage type
	\((\TS{Γ}, P, F, \{\{\underline{(ε, @)}\}\})\), abbreviated by $\TS{Γ}$, where
	\begin{align*}
		P &= \{ \tsbottom, \equals(γ), \TS{Γ} ∣ γ ∈ Γ \} \text{ and} \\*
		F &= \{ \id, \push_n(γ), \up_n, \down, \set(γ) ∣ γ ∈ Γ, n ∈ ℕ \}\text{.} \tag*\qedhere
	\end{align*}
\end{definition}

We call automata with tree stack storage \emph{tree stack automata (short: TSA)}.
In a storage configuration $(ξ, ρ)$ of a TSA $ℳ$ we call $ξ$ the \emph{stack (of $ℳ$)} and $ρ$ the \emph{stack pointer (of $ℳ$)}.

\begin{example}\label{ex:automaton}
	Let $Σ = \{ \text{a}, \text{b}, \text{c}, \text{d} \}$ and $Γ = \{*, \#\}$.
	Consider the TSA
		\[ ℳ = \big( [5], \TS{Γ}, Σ, 1, \{\underline{(ε, @)}\}, δ, \{5\}\big)\]
	where $δ$ is shown in \cref{fig:automaton:run}.
	\Cref{fig:automaton:run} also shows the valid run $τ_1 τ_2 τ_3 τ_4 τ_5 τ_6 τ_7 τ_8 τ_9$ in $ℳ$ recognising $abcd$.
	The language of $ℳ$ is $L(ℳ) = \{ \text{a}^n \text{b}^n \text{c}^n \text{d}^n ∣ n ∈ ℕ \}$ and thus not context-free.
\end{example}

\newcommand\simpleAutomaton{%
  \begin{minipage}[t]{.45\textwidth}
    \setlength\abovedisplayskip{0pt}
    \begin{alignat*}{6}
      δ:\enspace &  & τ_1 & = \big(1 &  & , \text{a} &  & , \TS{Γ}      &  & , \push_1(*)  &  & , 1\big) \\*
                 &  & τ_2 & = \big(1 &  & , ε        &  & , \TS{Γ}      &  & , \push_1(\#) &  & , 2\big) \\
                 &  & τ_3 & = \big(2 &  & , ε        &  & , \equals(\#) &  & , \down       &  & , 2\big) \\
                 &  & τ_4 & = \big(2 &  & , \text{b} &  & , \equals(*)  &  & , \down       &  & , 2\big) \\
                 &  & τ_5 & = \big(2 &  & , ε        &  & , \tsbottom   &  & , \up_1       &  & , 3\big) \\
                 &  & τ_6 & = \big(3 &  & , \text{c} &  & , \equals(*)  &  & , \up_1       &  & , 3\big) \\
                 &  & τ_7 & = \big(3 &  & , ε        &  & , \equals(\#) &  & , \down       &  & , 4\big) \\
                 &  & τ_8 & = \big(4 &  & , \text{d} &  & , \equals(*)  &  & , \down       &  & , 4\big) \\*
                 &  & τ_9 & = \big(4 &  & , ε        &  & , \tsbottom   &  & , \id         &  & , 5\big)
    \end{alignat*}
  \end{minipage}
  \hfill
  \begin{minipage}[t]{.45\textwidth}
    \setlength\abovedisplayskip{0pt}
    \begin{alignat*}{5}
      &&&\big( 1&&, \{ \underline{(ε, @)} \}&&, \text{abcd} &&\big) \\
      &⊢_{τ_1}{}
      &&\big( 1&&, \{ (ε, @), \underline{(1, *)} \}&&, \text{bcd} &&\big) \\
      &⊢_{τ_2}{}
      &&\big( 2&&, \{ (ε, @), (1, *), \underline{(11, \#)} \}&&, \text{bcd} &&\big) \\
      &⊢_{τ_3}{}
      &&\big( 2&&, \{ (ε, @), \underline{(1, *)}, (11, \#) \}&&, \text{bcd} &&\big) \\
      &⊢_{τ_4}{}
      &&\big( 2&&, \{ \underline{(ε, @)}, (1, *), (11, \#) \}&&, \text{cd} &&\big) \\
      &⊢_{τ_5}{}
      &&\big( 3&&, \{ (ε, @), \underline{(1, *)}, (11, \#) \}&&, \text{cd} &&\big) \\
      &⊢_{τ_6}{}
      &&\big( 3&&, \{ (ε, @), (1, *), \underline{(11, \#)} \}&&, \text{d} &&\big) \\
      &⊢_{τ_7}{}
      &&\big( 4&&, \{ (ε, @), \underline{(1, *)}, (11, \#) \}&&, \text{d} &&\big) \\
      &⊢_{τ_8}{}
      &&\big( 4&&, \{ \underline{(ε, @)}, (1, *), (11, \#) \}&&, ε &&\big) \\
      &⊢_{τ_9}{}
      &&\big( 5&&, \{ \underline{(ε, @)}, (1, *), (11, \#) \}&&, ε &&\big)
    \end{alignat*}
  \end{minipage}%
}

\begin{figure}[t]
  \simpleAutomaton
  \caption{Set of transitions and a valid run in $ℳ$ (cf. \cref{ex:automaton}).}
  \label{fig:automaton:run}
\end{figure}

While $ℳ$ from the above example only uses a monadic stack, a TSA may also utilise branching as shown in the next example.

\begin{example}\label{ex:automaton:example2}
  Let again $Σ = \{ \text{a}, \text{b}, \text{c}, \text{d} \}$ and $Γ = \{*, \#\}$.
  Consider the TSA
  \[
    ℳ' = \big( [9], \TS{Γ}, Σ, 1,  \{\underline{(ε, @)}\}, δ', \{9\}\big)
  \]
  with $δ' = \{τ_1', …, τ_{15}'\}$ where
\begin{alignat*}{12}
	τ_{ 1}' & = \big(1 &  & , \text{a} &  & , \tsbottom   &  & , \push_1(*)  &  & , 2\big)\text{,} \qquad\qquad&
  τ_{ 9}' & = \big(5 &  & , ε        &  & , \tsbottom   &  & , \up_1       &  & , 6\big)\text{,} \\
  τ_{ 2}' & = \big(2 &  & , \text{a} &  & , \TS{Γ}      &  & , \push_1(*)  &  & , 2\big)\text{,} &
  τ_{10}' & = \big(6 &  & , \text{c} &  & , \equals(*)  &  & , \up_1       &  & , 6\big)\text{,} \\
  τ_{ 3}' & = \big(2 &  & , ε        &  & , \TS{Γ}      &  & , \push_1(\#) &  & , 3\big)\text{,} &
  τ_{11}' & = \big(6 &  & , ε        &  & , \equals(\#) &  & , \down       &  & , 7\big)\text{,} \\
  τ_{ 4}' & = \big(3 &  & , ε        &  & , \TS{Γ}      &  & , \down       &  & , 3\big)\text{,} &
  τ_{12}' & = \big(7 &  & , ε        &  & , \equals(*)  &  & , \down       &  & , 7\big)\text{,} \\
  τ_{ 5}' & = \big(3 &  & , \text{b} &  & , \tsbottom   &  & , \push_2(*)  &  & , 4\big)\text{,} &
  τ_{13}' & = \big(7 &  & , ε        &  & , \tsbottom   &  & , \up_2       &  & , 8\big)\text{,} \\
  τ_{ 6}' & = \big(4 &  & , \text{b} &  & , \TS{Γ}      &  & , \push_1(*)  &  & , 4\big)\text{,} &
  τ_{14}' & = \big(8 &  & , \text{d} &  & , \equals(*)  &  & , \up_1       &  & , 8\big)\text{, and} \\
  τ_{ 7}' & = \big(4 &  & , ε        &  & , \TS{Γ}      &  & , \push_1(\#) &  & , 5\big)\text{,} &
  τ_{15}' & = \big(8 &  & , ε        &  & , \equals(\#) &  & , \id         &  & , 9\big)\text{.} \\
  τ_{ 8}' & = \big(5 &  & , ε        &  & , \TS{Γ}      &  & , \down       &  & , 5\big)\text{,}
\end{alignat*}
Then $ℳ'$ recognises the languages $L' = \{\text{a}^i \text{b}^j \text{c}^i \text{d}^j ∣ i, j ∈ ℕ ∖ \{0\}\}$.
A valid run of $ℳ'$ on the word $\text{aabccd}$ is shown in \cref{fig:automaton:run2}.
\end{example}

\begin{figure}[t]
  \setlength\abovedisplayskip{0pt}
  \setlength\belowdisplayskip{0pt}
  \begin{alignat*}{5}
 & 
 &  & \big( 1 &  & , \{ \underline{(ε, @)} \}                                               &  & , \text{aabccd} &  & \big) \\
 & ⊢_{τ_1'}{}
 &  & \big( 2 &  & , \{ (ε, @), \underline{(1, *)} \}                                       &  & , \text{abccd}  &  & \big) \\
 & ⊢_{τ_2'}{}
 &  & \big( 2 &  & , \{ (ε, @), (1, *), \underline{(11, *)} \}                              &  & , \text{bccd}   &  & \big) \\
 & ⊢_{τ_3'}{}
 &  & \big( 3 &  & , \{ (ε, @), (1, *), (11, *), \underline{(111, \#)} \}                   &  & , \text{bccd}   &  & \big) \\
 & ⊢_{τ_4'τ_4'τ_4'}{}
 &  & \big( 3 &  & , \{ \underline{(ε, @)}, (1, *), (11, *), (111, \#) \}                   &  & , \text{bccd}   &  & \big) \\
 & ⊢_{τ_5'}{}
 &  & \big( 4 &  & , \{ (ε, @), (1, *), (11, *), (111, \#), \underline{(2, *)} \}           &  & , \text{ccd}    &  & \big) \\
 & ⊢_{τ_7'}{}
 &  & \big( 5 &  & , \{ (ε, @), (1, *), (11, *), (111, \#), (2, *), \underline{(21, \#)} \} &  & , \text{ccd}    &  & \big) \\
 & ⊢_{τ_8'τ_8'}{}
 &  & \big( 5 &  & , \{ \underline{(ε, @)}, (1, *), (11, *), (111, \#), (2, *), (21, \#) \} &  & , \text{ccd}    &  & \big) \\
 & ⊢_{τ_9'}{}
 &  & \big( 6 &  & , \{ (ε, @), \underline{(1, *)}, (11, *), (111, \#), (2, *), (21, \#) \} &  & , \text{ccd}    &  & \big) \\
 & ⊢_{τ_{10}'τ_{10}'}{}
 &  & \big( 6 &  & , \{ (ε, @), (1, *), (11, *), \underline{(111, \#)}, (2, *), (21, \#) \} &  & , \text{d}      &  & \big) \\
 & ⊢_{τ_{11}'}{}
 &  & \big( 7 &  & , \{ (ε, @), (1, *), \underline{(11, *)}, (111, \#), (2, *), (21, \#) \} &  & , \text{d}      &  & \big) \\
 & ⊢_{τ_{12}'τ_{12}'}{}
 &  & \big( 7 &  & , \{ \underline{(ε, @)}, (1, *), (11, *), (111, \#), (2, *), (21, \#) \} &  & , \text{d}      &  & \big) \\
 & ⊢_{τ_{13}'}{}
 &  & \big( 8 &  & , \{ (ε, @), (1, *), (11, *), (111, \#), \underline{(2, *)}, (21, \#) \} &  & , \text{d}      &  & \big) \\
 & ⊢_{τ_{14}'}{}
 &  & \big( 8 &  & , \{ (ε, @), (1, *), (11, *), (111, \#), (2, *), \underline{(21, \#)} \} &  & , ε             &  & \big) \\
 & ⊢_{τ_{15}'}{}
 &  & \big( 9 &  & , \{ (ε, @), (1, *), (11, *), (111, \#), (2, *), \underline{(21, \#)} \} &  & , ε             &  & \big)
  \end{alignat*}
  \caption{A valid run in $ℳ'$ (cf.~\cref{ex:automaton:example2}).}
  \label{fig:automaton:run2}
\end{figure}

%
\subsection{Restricted TSA}
Similar to Villemonte de la Clergerie \cite{Vil02}, we are interested in how often any specific position in the stack is reached from below.
For every TSA $ℳ$ we define $(c_ℳ(θ): ℕ_+^* → ℕ_+ ∣ θ ∈ R_ℳ^{\text{v}})$ as the family of total functions where
	$c_ℳ(ε)(ρ) = 0$ for every $ρ ∈ ℕ_+^*$, and
	for every $θτ ∈ R_ℳ^{\text{v}}$ with $τ ∈ δ$
		we have $c_ℳ(θτ) = c_ℳ(θ)$ if $τ$ has neither a $\push$- nor $\up$-instruction, and
		we have $c_ℳ(θτ) = c_ℳ(θ)[ρ ↦ c_ℳ(θ)(ρ) + 1]$ if $τ$ has a $\push$- or $\up$-instruction and $\{\underline{(ε, @)}\} ⊢_{θτ} (ξ, ρ)$ for some tree $ξ$.
We call $ℳ$ \emph{$k$-restricted} if $c_ℳ(θ)(ρ) ≤ k$ holds for every $θ ∈ R_ℳ^{\text{v}}$ and $ρ ∈ ℕ_+^*$.
Note that $ℳ$ from \cref{ex:automaton} and $ℳ'$ from \cref{ex:automaton:example2} are both 2-restricted.

Since (unrestricted) TSA can write at any position (except for $ε$) arbitrarily often, they can simulate Turing machines.
It is apparent that $1$-restricted TSA are exactly as powerful as pushdown automata.
The power of $k$-restricted TSA for $k ≥ 2$ is thus between the context-free and recursively enumerable languages.

%
%

\subsection{Normal forms}
We will see that loops that do not move the stack pointer as well as acceptance with non-$ε$ stack pointers can be removed.

Let $ℳ = (Q, \TS{Γ}, Σ, q_{\text{i}}, \{\underline{(ε, @)}\}, δ, Q_{\text{f}})$ be a TSA.
For each $q, q' ∈ Q$ and $γ, γ' ∈ Γ ∪ \{@\}$ we define $R_ℳ(q, q')|_{\stay}^{γ→γ'}$ as the set of runs $θ$ in $ℳ$ such that $θ$ only uses $\set$- or $\id$-instructions and there are tree stacks $(ξ, ρ), (ζ, ρ) ∈ \TS{Γ}$ with $ξ(ρ) = γ$, $ζ(ρ) = γ'$, and $(q, (ξ, ρ)) ⊢_θ (q', (ζ, ρ))$.

\begin{definition}
	We call a TSA $ℳ = (Q, \TS{Γ}, Σ, q_{\text{i}}, \{\underline{(ε, @)}\}, δ, Q_{\text{f}})$ \emph{cycle-free} if for every $q ∈ Q$ and $γ ∈ Γ ∪ \{@\}$ we have $R_ℳ(q,q)|_{\stay}^{γ→γ} = \{ε\}$.
\end{definition}

\begin{lemma}\label{lem:cycle-free}
	For every ($k$-restricted) TSA $ℳ$, there is a ($k$-restricted) cycle-free TSA $ℳ'$ such that $L(ℳ) = L(ℳ')$.
\end{lemma}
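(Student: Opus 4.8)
The plan is to treat the $\set$- and $\id$-transitions of $\mathcal{M}$ as a finite automaton $\mathcal{A}$ whose states are the \emph{local configurations} $(q,\gamma) \in Q \times (\Gamma \cup \{@\})$ and whose edges, labelled by $\Sigma \cup \{\varepsilon\}$, record both the state change and the change of the current symbol (an $\id$-edge keeps $\gamma$, a $\set(\gamma')$-edge rewrites it to $\gamma'$), subject to the predicate of the transition (where $\tsbottom$ is identified with the local configurations whose symbol is $@$, since the root always carries $@$ and $\set$ is undefined there). A stay-run of $\mathcal{M}$ is exactly a path in $\mathcal{A}$, so $\mathcal{M}$ is cycle-free precisely when $\mathcal{A}$ has no nonempty path from some $(q,\gamma)$ back to $(q,\gamma)$. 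There are two kinds of offending cycles: those labelled only by $\varepsilon$ and those reading at least one input symbol. I would remove them in two passes and then argue that language and $k$-restriction are preserved.

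First I would eliminate the $\varepsilon$-labelled cycles by saturation. The $\varepsilon$-edges of $\mathcal{A}$ induce a \emph{finite} reachability relation $\leadsto$ on $Q \times (\Gamma \cup \{@\})$, computable as a transitive closure. I would then build $\mathcal{M}'$ by pre-composing this relation with every genuine step: for each $(q,\gamma) \leadsto (q_1,\gamma_1)$ and each non-$\varepsilon$-stay transition $\tau = (q_1, \omega, p, f, q_2)$ of $\mathcal{M}$ whose predicate is met at symbol $\gamma_1$, I add a transition that, from state $q$ at a node carrying $\gamma$, first re-establishes the symbol $\gamma_1$ by a single $\set$ (or does nothing if $\gamma = \gamma_1$, and nothing at the root) and then performs $\tau$; acceptance is likewise propagated backwards along $\leadsto$ into the final states. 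Deleting the original $\varepsilon$-stay-transitions then leaves an automaton with the same language and no $\varepsilon$-stay-cycle. This pass is routine but bookkeeping-heavy: the symbol component of the local configuration must be threaded correctly, the root/$@$ case treated separately, and the single trailing $\set$ shown to reproduce the cumulative effect of the collapsed $\varepsilon$-run.

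Next I would remove the remaining cycles, which necessarily read input. The idea is to move every input symbol read off the stay-steps and onto a pointer-moving step, so that it can no longer sit on a stay-cycle: a reading transition $(q, a, p, f, q')$ with $f \in \{\id, \set(\gamma')\}$ is replaced by the composite ``push a fresh auxiliary symbol $\bot \notin \Gamma$ to a vacant child while reading $a$, descend again, then apply $f$ by a single $\set$'' (the last step being an $\varepsilon$-stay-step, to be cleared by a second saturation pass as above). Because $\push$ and $\down$ move the pointer, none of the new reading steps belongs to any stay-run, so they cannot create stay-cycles; and because each auxiliary node is reached from below exactly once, the counter $c_{\mathcal{M}}$ is unchanged on the original nodes and equals $1$ on the auxiliary ones, so $k$-restriction is preserved for $k \ge 1$. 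Correctness of the whole construction I would establish by a step-by-step simulation in both directions, matching each maximal stay-phase of $\mathcal{M}$ with its saturated shortcut in $\mathcal{M}'$ and each auxiliary excursion with the single read it encodes.

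The hard part will be the last pass, specifically guaranteeing that a vacant auxiliary child is always available. Since a fixed node may be the seat of unboundedly many stay-phases over a run (it can be re-entered from above by $\down$ arbitrarily often), naively pushing the auxiliary marker to the same child index fails the freshness requirement of $\push$, while re-entering a reserved auxiliary node from below would violate the bound $k$. I would resolve this by laying the successive reads of a stay-phase along a fresh \emph{ascending} branch of auxiliary nodes---one $\push$ per symbol, so every auxiliary node is still entered from below only once---and returning afterwards by a sequence of $\down$-steps, using the boundary between $\bot$ and the genuine symbol to detect when the original node is reached again. Making this excursion interact correctly with the surrounding real stack operations, while simultaneously preserving the language, cycle-freeness, and the counter bound, is the delicate core of the argument; the $\varepsilon$-loop removal and the language-preservation proof are comparatively mechanical.
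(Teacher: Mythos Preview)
Your two-pass plan (saturate the $\varepsilon$-stay relation, then move every remaining read onto a $\push$-step) is a genuinely different route from the paper's. The paper does not normalise the whole stay-automaton at once; it selects one shortest stay-cycle $\tau_1\cdots\tau_n \in R_{\mathcal{M}}(q,q)|_{\stay}^{\gamma\to\gamma}\setminus\{\varepsilon\}$, deletes its closing transition $\tau_n$, adds a fresh state $\tilde{q}_0$ that behaves like the base state $q$ except that it cannot take $\tau_1$, and simulates the $\ell$-fold iteration of the cycle by pushing $\ell{+}1$ auxiliary markers $\#,*,*,\ldots$ onto a fresh child index $j$ (one not used by any $\push$ in $\delta$) and then descending back along the $*$'s until the $\#$ is seen. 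It then iterates this surgery on the next remaining cycle. Your saturation pass is cleaner conceptually and would, if the second pass went through, dispose of all cycles in one sweep rather than one at a time.

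The gap is exactly where you already flag the argument as ``delicate'', and your proposed resolution does not close it. The ascending branch absorbs several reads occurring \emph{within one stay-phase} at a node $\rho$, but you say nothing about a \emph{later} stay-phase at the same $\rho$. At that point the anchor $\rho j$ of your auxiliary branch is already occupied, so a new $\push_j$ is undefined; and climbing the existing branch with $\up_j$ before extending it re-enters $\rho j$ (and every node above it) from below, which in the $k$-restricted case can drive $c_{\mathcal{M}'}(\theta)(\rho j)$ past $k$. This situation is not exotic: even for $k=1$ a node $\rho$ may be the seat of several stay-phases in a single run (enter from below, excursion to $\rho 1$, return, excursion to $\rho 2$, return, \ldots), so your construction must produce a genuinely fresh auxiliary child on each such visit, yet the finite control has no way to know which visit it is on. Until you either (a) bound the number of stay-phases at $\rho$ and pre-allocate that many fresh indices in a way the control can track, or (b) restructure the simulation so that once the auxiliary branch at $\rho j$ exists all subsequent work ``at $\rho$'' is carried out from the top of that branch and you never physically return to $\rho$, the $k$-restricted clause of the lemma is not covered by your plan.
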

\begin{proof}[Proof idea.]\let\qed\relax
	Instead of performing all iterations of some loop $θ ∈ R_ℳ(q,q)|_{\stay}^{γ → γ} ∖ \{ε\}$ at the same position $ρ$ in the stack, we insert additional $\push$-instructions before each iteration of the loop.
	In order to find position $ρ$ again after the desired number of iterations, we write symbols $*$ or $\#$ before every $\push$, where a $*$ signifies that we have to perform at least two further $\down$-instructions to reach $ρ$ and $\#$ signifies that we will be at $ρ$ after one more $\down$-instruction.
	After returning to $ρ$, we enter a state $\tilde{q}$ that is equivalent to $q$ except that it prevents us from entering the loop again.
\end{proof}
\begin{proof}
	Let $ℳ = (Q, \TS{Γ}, Σ, q_{\text{i}}, \{\underline{(ε, @)}\}, δ, Q_{\text{f}})$ be a TSA and
  \[τ_1 ⋯ τ_n = (q_0, ω_1, p_1, f_1, q_1) ⋯ (q_{n-1}, ω_n, p_n, f_n, q_n)\]
  be a shortest element of $R_ℳ(q, q)|_{\stay}^{γ → γ} ∖ \{ε\}$ with \( q_0 = q = q_n \).
		
	Construct the automaton $ℳ' = (Q', \TS{Γ'}, Σ, q_{\text{i}}, \{\underline{(ε, @)}\}, δ', Q_{\text{f}}')$ where
    $Q' = Q ∪ \{q_0', …, q_{n-1}', q^{↑}, q^{↓}, \tilde{q}_0\}$,
    $q_0', …, q_{n-1}', q^{↑}, q^{↓}, \tilde{q}_0$ are pairwise different and not in $Q$,
    $Γ' = Γ ∪ \{*, \#\}$, $*$ and $\#$ are different and not in $Γ$,
    $Q_{\text{f}}' = Q_{\text{f}}$ if $q_0 ∉ Q_{\text{f}}$,
    $Q_{\text{f}}' = Q_{\text{f}} ∪ \{\tilde{q}_0\}$ if $q_0 ∈ Q_{\text{f}}$,
    $δ'$ contains the transition $(\bar{q}, ω, p', f, \hat{q})$ for every $(\bar{q}, ω, p, f, \hat{q}) ∈ δ ∖ \{τ_n\}$ where $p' = \TS{Γ'}$ if $p = \TS{Γ}$, and $p' = p$ otherwise,
    $δ'$ contains the transition $(\tilde{q}_0, ω, p', f, \hat{p})$ for every $(q_0, ω, p, f, \hat{q}) ∈ δ ∖ \{τ_1\}$ where $p' = \TS{Γ'}$ if $p = \TS{Γ}$, and $p' = p$ otherwise, and also
    $δ'$ contains transitions
    \begin{align*}
      \tilde{τ}_n &= (q_{n-1}, ω_n, p_n', f_n, \tilde{q}_0) \\*
      τ^{↑} &= (q, ε, \TS{Γ'}, \push_j(\#), q^{↑}) \,\text{,} \\
      τ_0' &= (q^{↑}, ε, \TS{Γ'}, \push_j(*), q_0') \,\text{,} \\
      τ_κ' &= (q_{κ-1}', ω_κ, \TS{Γ'}, \id, q_κ')
             \tag*{for every $κ ∈ [n-1]$,} \\
      τ_n' &= (q_{n-1}', ω_n, \TS{Γ'}, \id, q^{↑})\,\text{,} \\
      τ' &= (q^{↑}, ε, \TS{Γ'}, \id, q^{↓})\,\text{,} \\
      τ_{\text{a}}^{↓} &= (q^{↓}, ε, \equals(*), \down, q^{↓})\,\text{, and} \\*
      τ_{\text{b}}^{↓} &= (q^{↓}, ε, \equals(\#), \down, q)
    \end{align*}
    where $p_n' = \TS{Γ'}$ if $p_n = \TS{Γ}$ and $p_n' = p_n$ otherwise, and
  $j ∈ ℕ$ such that no $\push_j$-instruction occurs in \(δ\).
  By definition of the above transitions, we have
    \(⟦(τ_1 ⋯ τ_n)^{\ell}⟧ = ⟦τ^{↑} (τ_0' ⋯ τ_n')^{\ell} (τ_{\text{a}}^{↓})^{\ell} τ_{\text{b}}^{↓}⟧\)
  for every $\ell ∈ ℕ$ and hence for every valid run $θ$ in $ℳ$, there is a valid run $θ'$ in $ℳ'$ with $⟦θ⟧ = ⟦θ'⟧$. 
	We iterate the above construction until the automaton is cycle-free.
\end{proof}

\begin{definition}
	We say that a TSA $ℳ$ is in \emph{stack normal form} if the stack pointer of $ℳ$ is $ε$ whenever we reach a final state.
\end{definition}

\begin{lemma}\label{lem:stack-normal-form}
	For every ($k$-restricted) TSA $ℳ$, there is a ($k$-restricted) TSA $ℳ'$ in stack normal form such that $L(ℳ) = L(ℳ')$.
\end{lemma}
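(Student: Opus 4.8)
The plan is to let $ℳ'$ simulate $ℳ$ faithfully and, precisely when $ℳ$ would accept, enter a short \emph{descending phase} that walks the stack pointer down to the root $ε$ without reading input, accepting only there. Because stack normal form constrains the pointer solely at the instant of acceptance, it is enough to route every accepting computation through a single fresh final state that is reachable only while the pointer is $ε$.

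Concretely, I would take $ℳ' = (Q ∪ \{q^{↓}, \hat q\}, \TS{Γ}, Σ, q_{\text{i}}, \{\underline{(ε, @)}\}, δ', \{\hat q\})$ with $q^{↓}$ and $\hat q$ fresh and distinct, keeping $Γ$ and every transition of $δ$ unchanged, and adding exactly the transitions $(q, ε, \TS{Γ}, \id, q^{↓})$ for every $q ∈ Q_{\text{f}}$, the loop $(q^{↓}, ε, \TS{Γ}, \down, q^{↓})$, and the acceptor $(q^{↓}, ε, \tsbottom, \id, \hat q)$. Two observations make this work without any predicate rewriting: the $\down$-loop is self-terminating because $\down$ is undefined at the root, so it fires exactly as long as the pointer differs from $ε$; and the guard $\tsbottom$ on the last transition forces acceptance to occur only at $ε$.

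For $L(ℳ) ⊆ L(ℳ')$ I would extend any valid run of $ℳ$ — which ends in some $q ∈ Q_{\text{f}}$ at a pointer $ρ$ — by the step to $q^{↓}$, then $|ρ|$ iterations of the $\down$-loop, then the step to $\hat q$; all of these read $ε$, so the recognised word is unchanged. For the reverse inclusion I would use a structural decomposition: $\hat q$ has no outgoing transition, the only transitions into $\hat q$ leave $q^{↓}$, and from $q^{↓}$ one can only reach $q^{↓}$ or $\hat q$; hence once a run enters $q^{↓}$ it never returns to $Q$, and every valid run of $ℳ'$ splits as a valid $ℳ$-run reaching $Q_{\text{f}}$ followed by the descending phase, so deleting the phase recovers a valid run of $ℳ$ with the same yield. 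Stack normal form then holds by the $\tsbottom$-guard, and $k$-restrictedness is inherited because the appended transitions use only $\id$ and $\down$, so $c_{ℳ'}$ on any valid run equals $c_ℳ$ on its $ℳ$-prefix and the maximum over all positions is unchanged. The only subtle point — and the closest thing to an obstacle — is ruling out runs that weave in and out of the descending phase or accept at a non-root position; this is exactly what the partiality of $\down$ together with the $\tsbottom$-guard prevent, and it is what keeps the decomposition of $ℳ'$-runs clean.
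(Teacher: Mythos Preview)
Your construction is essentially identical to the paper's: two fresh states, an $\id$-step from each old final state into a descending state, a $\down$-loop there, and a $\tsbottom$-guarded $\id$-step into the single new final state. Your correctness and $k$-restrictedness arguments are in fact more carefully spelled out than the paper's, which simply observes that the new final state is reachable from every old final state without reading input and that no new $\push$/$\up$-instructions are introduced.
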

\begin{proof}[Proof idea.]\let\qed\relax
	We introduce a new state $q_{\text{f}}$ as the only final state and add transitions such that, beginning from any original final state, we may perform $\down$-instructions until the predicate $\tsbottom$ is satisfied and then enter state~$q_{\text{f}}$.
\end{proof}
\begin{proof}
	Let $ℳ = (Q, \TS{Γ}, Σ, q_{\text{i}}, \{\underline{(ε, @)}\}, δ, Q_{\text{f}})$ and $q_{\down}, q_{\text{f}} ∉ Q$.
	We construct an automaton $ℳ' = (Q ∪ \{q_{\down}, q_{\text{f}}\}, \TS{Γ}, Σ, q_{\text{i}}, \{\underline{(ε, @)}\}, δ', \{q_{\text{f}}\})$ where
	\begin{align*}
	δ' =
	δ ∪ \{ (q, ε, Γ, \id, q_{\down}) ∣ q ∈ Q_{\text{f}} \}
	  &∪ \{ (q_{\down}, ε, Γ, \down, q_{\down}) \} \\*
	  &∪ \{ (q_{\down}, ε, \tsbottom, \id, q_{\text{f}}) \}\;\text{.}
	\end{align*}
	Since $q_{\text{f}}$ is reachable from every element of $Q_{\text{f}}$ and every storage configuration without reading additional symbols, we have that $L(ℳ) = L(ℳ')$.
	Also $ℳ'$ is in stack normal form since $q_{\text{f}}$ can only be reached when the configuration satisfies the predicate $\tsbottom$.
	This construction preserves $k$-restrictedness since $δ' ∖ δ$ can not reach states from $Q$ and contains no additional $\push$ or $\up$-instructions.
\end{proof}

Note that $ℳ$ from \cref{ex:automaton} is cycle-free and in stack normal form whereas $ℳ'$ from \cref{ex:automaton:example2} is cycle-free but not in stack normal form.


\section{The equivalence of MCFG and restricted TSA}
\label{sec:automata_equal_MCFG}

\subsection{Every MCFG has an equivalent restricted TSA}
\label{sec:automata_supseteq_MCFL}

The following construction applies the idea of Villemonte de la Clergerie \cite[Sec.~4]{Vil02} to the case of parallel multiple context-free grammars where, additionally, we have to deal with copying, deletion, and permutation of argument components.
The overall idea is to incrementally guess for an input word $w$ a derivation $d$ of $G$ (that accepts $w$) on the stack while traversing the relevant components of the composition functions on the right-hand sides of already guessed rules (in $d$) left-to-right.
This specific traversal of the derivation tree is ensured using states and stack symbols that encode positions in the rules of $G$.%
\footnote{The control flow of our constructed automaton is similar to that of the treewalk evaluator for attribute grammars \cite[Sec.~3]{KenWar76}. The two major differences are that the treewalk evaluator also treats inherited attributes (which are not present in PMCFGs) and that our constructed automaton generates the tree on the fly (while the treewalk evaluator is already provided with the tree).}

\begin{construction}\label{con:PMCFL-toAutomaton}
Let $G = (N, Σ, I, R)$ be a PMCFG, $Γ = \{\Box\} ∪ R ∪ \bar{R}$, and
\( \bar{R} = \big\{ ⟨r, i, j⟩ ∣ r = A → [u_1, …, u_s](A_1, …, A_{\ell}) ∈ R, i ∈ [s], j ∈ \{0, …, \lvert u_i \rvert\} \big\}\,\text{.} \)
Intuitively, an element $⟨r, i, j⟩ ∈ \bar{R}$ stands for the position in $r$ right after the $j$-th symbol of the $i$-th component.
The \emph{automaton with respect to $G$} is $ℳ(G) = (Q, \TS{Γ}, Σ, \Box, \{\underline{(ε, @)}\}, \{ \Box \}, δ)$ where
	$Q = \{ q, q_+, q_- ∣ q ∈ \bar{R} ∪ \{\Box\} \}$ and $δ$ is the smallest set such that
for every $r = S → [u](A_1, …, A_{\ell}) ∈ R$ with $S ∈ I$, we have the transitions
	\begin{align*}
		\mathrm{init}(r)
			&= \big( \Box, ε, \TS{Γ}, \push_1(\Box), ⟨r, 1, 0⟩ \big)\,\text{,} \\
		\mathrm{suspend}_1(r, 1, \Box)
			&= \big( ⟨r, 1, \lvert u \rvert⟩, ε, \equals(\Box), \set(r), \Box_- \big)\,\text{, and} \\
		\mathrm{suspend}_2(\Box)
			&=\big( \Box_-, ε, \TS{Γ}, \down, \Box \big)\,\text{ in $δ$;}
\intertext{%
for every $r = A → [u_1, …, u_s](A_1, …, A_{\ell}) ∈ R$, $i ∈ [s]$, $j ∈ [\lvert u_i \rvert]$ where $σ ∈ Σ$ is the $j$-th symbol in $u_i$, we have the transition%
}
		\mathrm{read}(r, i, j)
			&= \big( ⟨r, i, j - 1⟩, σ, \TS{Γ}, \id, ⟨r, i, j⟩ \big)\,\text{ in $δ$,}
\intertext{%
and for every $r = A → [u_1, …, u_s](A_1, …, A_{\ell}) ∈ R$, $i ∈ [s]$, $j ∈ [\lvert u_i \rvert]$, $κ ∈ [\ell]$, $r' = A_κ → [v_1, …, v_{s'}](B_1, …, B_{\ell'}) ∈ R$, $m ∈ [s']$ where $x_κ^m ∈ X$ is the $j$-th symbol in $u_i$, we have the transitions (abbreviating $⟨r, i, j⟩$ by $q$)%
}
		\mathrm{call}(r, i, j, r')
			&= \big( ⟨r, i, j-1⟩, ε, \TS{Γ}, \push_κ(q), ⟨r', m, 0⟩ \big)\,\text{,} \\
		\mathrm{resume}_1(r, i, j)
			&= \big( ⟨r, i, j-1⟩, ε, \TS{Γ}, \up_κ, q_+ \big)\,\text{,} \\
		\mathrm{resume}_2(r, i, j, r')
			&= \big( q_+, ε, \equals(r'), \set(q), ⟨r', m, 0⟩ \big)\,\text{,} \\
		\mathrm{suspend}_1(r', m, q)
			&= \big( ⟨r', m, \lvert v_m \rvert⟩, ε, \equals(q), \set(r'), q_- \big)\,\text{, and} \\
		\mathrm{suspend}_2(q)
			&=\big( q_-, ε, \TS{Γ}, \down, q \big)\,\text{ in $δ$}\tag*\qedhere
	\end{align*}
\end{construction}

Let us abbreviate a run $\mathrm{suspend}_1(r', m, q)\,\mathrm{suspend}_2(q)$ by $\mathrm{suspend}(r', m, q)$ and a run $\mathrm{resume}_1(r, i, j)\,\mathrm{resume}_2(r, i, j, r')$ by $\mathrm{resume}(r, i, j, r')$.


\begin{example}\label{ex:PMCFL_toAutomaton_subseteq}
	Consider the MCFG $G = (\{S, A, B\}, \{a, b, c, d\}, \{S\}, R)$ where
	\begin{align*}
		R\colon\enspace
		r_1 &= S → [x_1^1 x_2^1 x_1^2 x_2^2](A,B) &
		r_2 &= A → [\text{a} x_1^1, \text{c} x_1^2](A) &
		r_3 &= A → [ε, ε]() \\*&&
		r_4 &= B → [\text{b} x_1^1, \text{d} x_1^2](B) &
		r_5 &= B → [ε, ε]()\,\text{.}
	\end{align*}
	Then $L(G) = \{ \text{a}^i \text{b}^j \text{c}^i \text{d}^j ∣ i, j ∈ ℕ \}$.
  \Cref{fig:PMCFL_toAutomaton_subseteq} shows that $ℳ(G)$ recognises $\text{bd}$.
\end{example}

\begin{quote}
  \emph{For the rest of \cref{sec:automata_supseteq_MCFL}, let $G = (N, Σ, I, R)$ and $\bar{R}$ be defined as in \cref{con:PMCFL-toAutomaton}.}
\end{quote}

\begin{figure}[tb]
	\begin{alignat*}{4}
		&&&\big( \Box&&, \{\underline{(ε, @)}\} &&\big) \\*
		&⊢_{\mathrm{init}(r_1)}{}
			&&\big( ⟨r_1, 1, 0⟩&&, \{ (ε,@), \underline{(1,\Box)} \} &&\big) \\
		&⊢_{\mathrm{call}(r_1, 1, 1, r_3)}{}
			&&\big( ⟨r_3, 1, 0⟩&&, \{ (ε,@), (1,\Box), \underline{(11,⟨r_1, 1, 1⟩)} \} &&\big) \\
		&⊢_{\mathrm{suspend}(r_3, 1, ⟨r_1, 1, 1⟩)}{}
			&&\big( ⟨r_1, 1, 1⟩&&, \{ (ε,@), \underline{(1,\Box)}, (11,r_3) \} &&\big) \\
		&⊢_{\substack{\mathrm{call}(r_1, 1, 2, r_4) \\ \mathrm{read}(r_4, 1, 1)}}{}
			&&\big( ⟨r_4, 1, 1⟩&&, \{ (ε,@), (1,\Box), (11,r_3), \underline{(12,⟨r_1, 1, 2⟩)} \} &&\big) \\
		&⊢_{\mathrm{call}(r_4, 1, 2, r_5)}{}
			&&\big( ⟨r_5, 1, 0⟩&&, \{ (ε,@), (1,\Box), (11,r_3), (12,⟨r_1, 1, 2⟩), \underline{(121, ⟨r_4, 1, 2⟩)} \} &&\big) \\
		&⊢_{\mathrm{suspend}(r_5, 1, ⟨r_4, 1, 2⟩)}{}
			&&\big( ⟨r_4, 1, 2⟩&&, \{ (ε,@), (1,\Box), (11,r_3), \underline{(12,⟨r_1, 1, 2⟩)}, (121, r_5) \} &&\big) \\
		&⊢_{\mathrm{suspend}(r_4, 1, ⟨r_1, 1, 2⟩)}{}
			&&\big( ⟨r_1, 1, 2⟩&&, \{ (ε,@), \underline{(1,\Box)}, (11,r_3), (12, r_4), (121, r_5) \} &&\big) \\
		&⊢_{\mathrm{resume}(r_1, 1, 3, r_3)}{}
			&&\big( ⟨r_3, 2, 0⟩&&, \{ (ε,@), (1,\Box), \underline{(11,⟨r_1, 1, 3⟩)}, (12, r_4), (121, r_5) \} &&\big) \\
		&⊢_{\mathrm{suspend}(r_3, 2, ⟨r_1, 1, 3⟩)}{}
			&&\big( ⟨r_1, 1, 3⟩&&, \{ (ε,@), \underline{(1,\Box)}, (11,r_3), (12, r_4), (121, r_5) \} &&\big) \\
		&⊢_{\substack{\mathrm{resume}(r_1, 1, 4, r_4) \\ \mathrm{read}(r_4, 2, 1)}}{}
			&&\big( ⟨r_4, 2, 1⟩&&, \{ (ε,@), (1,\Box), (11,r_3), \underline{(12, ⟨r_1, 1, 4⟩)}, (121, r_5) \} &&\big) \\
		&⊢_{\mathrm{resume}(r_4, 2, 2, r_5)}{}
			&&\big( ⟨r_5, 2, 0⟩&&, \{ (ε,@), (1,\Box), (11,r_3), (12, ⟨r_1, 1, 4⟩), \underline{(121, ⟨r_4, 2, 2⟩)} \} &&\big) \\
		&⊢_{\mathrm{suspend}(r_5, 2, ⟨r_4, 2, 2⟩)}{}
			&&\big( ⟨r_4, 2, 2⟩&&, \{ (ε,@), (1,\Box), (11,r_3), \underline{(12, ⟨r_1, 1, 4⟩)}, (121, r_5) \} &&\big) \\
		&⊢_{\mathrm{suspend}(r_4, 2, ⟨r_1, 1, 4⟩)}{}
			&&\big( ⟨r_1, 1, 4⟩&&, \{ (ε,@), \underline{(1,\Box)}, (11,r_3), (12, r_4), (121, r_5) \} &&\big) \\
		&⊢_{\mathrm{suspend}(r_1, 1, \Box)}{}
			&&\big( \Box&&, \{ \underline{(ε,@)}, (1, r_1), (11,r_3), (12, r_4), (121, r_5) \} &&\big)
	\end{alignat*}
	\caption{Run of $ℳ(G)$ that recognises $\text{bd}$ (cf. \cref{ex:PMCFL_toAutomaton_subseteq}). The symbols $\text{b}$ and $\text{d}$ are read by $\mathrm{read}(r_4, 1, 1)$ and $\mathrm{read}(r_4, 2, 1)$, respectively, all other transitions in this run read $ε$.}
	\label{fig:PMCFL_toAutomaton_subseteq}
\end{figure}

\begin{lemma}\label{lem:MCFL_Automaton_restricted}
	The TSA $ℳ(G)$ is $k$-restricted if $G$ is a $k$-MCFG.
\end{lemma}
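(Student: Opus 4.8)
The plan is to bound, for every valid run $θ$ of $ℳ(G)$ and every stack position $ρ$, the counter value $c_{ℳ(G)}(θ)(ρ)$ by the sort of the non-terminal that $ρ$ represents in the guessed derivation, and then to invoke the fan-out restriction of a $k$-MCFG. I would first isolate the transitions of $ℳ(G)$ that carry a $\push$- or an $\up$-instruction, which by \cref{con:PMCFL-toAutomaton} are exactly $\mathrm{init}$, $\mathrm{call}$, and $\mathrm{resume}_1$, since only these increment the counter. The transition $\mathrm{init}$ is the unique way to reach the root position $1$, and because its $\push_1$ requires position $1$ to be vacant it fires at most once in any run. Every $\mathrm{call}(r,i,j,r')$ (via $\push_κ$) and every $\mathrm{resume}_1(r,i,j)$ (via $\up_κ$) is the only way to reach a child position $ρκ$ from its parent $ρ$, and in both cases the pointer is, after at most one intervening $\set$, carried into a state $⟨r',m,0⟩$; that is, each reach-from-below of $ρκ$ initiates the processing of exactly one component $m$ of the node at $ρκ$, and that processing ends with a $\mathrm{suspend}$ that moves the pointer back down.

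This establishes that $c_{ℳ(G)}(θ)(ρ)$ equals the number of components of the node at $ρ$ that are processed during $θ$. Since a node whose non-terminal has sort $s$ has exactly $s$ components, and a $k$-MCFG contains only rules of fan-out at most $k$ so that every non-terminal occurring in a derivation has sort at most $k$, it then suffices to prove that \emph{no component of any node is processed twice}; this immediately gives $c_{ℳ(G)}(θ)(ρ) ≤ s ≤ k$, and the bound for all $ρ$ follows since $ε$ is never reached from below.

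The single-processing claim is the main obstacle, and it is precisely where linearity of the MCFG enters. A component $m$ of the node at $ρκ$ is initiated only when its parent executes an occurrence of the variable $x_κ^m$; as $G$ is an MCFG every composition function is linear, so $x_κ^m$ occurs at most once across all components of the parent's rule. I would therefore argue by induction on the depth of $ρ$: the base case is the root, whose single component is initiated exactly once because $\mathrm{init}$ fires at most once; in the inductive step the parent processes each of its own components at most once, so the unique component of the parent that contains the occurrence of $x_κ^m$ is processed at most once, whence component $m$ of $ρκ$ is initiated at most once. The only point requiring care is checking that every transition carrying a $\push$- or $\up$-instruction is indeed completed into a component-start state; in particular $\mathrm{resume}_1$ is invariably followed by $\mathrm{resume}_2$, because the intermediate state $q_+$ admits only $\mathrm{resume}_2$-transitions and the child symbol is the stored rule $r'$, so that the guard $\equals(r')$ is satisfied. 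With this in place, the counting is routine.
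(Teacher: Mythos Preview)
Your argument is correct and follows essentially the same line as the paper's proof: both identify that only $\mathrm{init}$, $\mathrm{call}$, and $\mathrm{resume}_1$ can increment the counter, invoke linearity to bound the number of variable occurrences $x_κ^m$ by the fan-out of the child non-terminal, and then argue that no such occurrence is executed twice. The only difference is one of presentation---the paper counts the relevant source states $⟨r,i,j-1⟩$ at the parent position and asserts (``it is easy to see'') that they do not repeat, whereas you count the target components at the child position and spell out the depth induction that justifies this non-repetition; your version is thus somewhat more explicit on exactly the point the paper leaves to the reader.
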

\begin{proof}
	Let $G = (N, Σ, I, R)$.
	Consider some arbitrary position $ρ ∈ ℕ^*$ and number $κ ∈ ℕ$.
	Position $ρκ$ can only be reached from below if the current stack pointer is at position $ρ$ and if we either execute
	the transition \( \mathrm{call}(r, i, j, r') \) or
	the transition \( \mathrm{resume}_1(r, i, j) \)
	for some
		$r = A → [u_1, …, u_s](A_1, …, A_{\ell}) ∈ R$,
		$r' = A_κ → [v_1, …, v_{s'}](B_1, …, B_{\ell'}) ∈ R$,
		$i ∈ [s]$,
		$j ∈ [\lvert u_i \rvert]$, and
		$m ∈ [s']$ with $(u_i)_j = x_κ^m$.
	For those transitions to be applicable, the automaton has to be in state $⟨r, i, j⟩$.
	Therefore, there are exactly as many states from which we can reach position $ρi$ as there are occurrences of elements of $\{x_κ^1, …, x_κ^{s'}\}$ in the string $u_1⋯u_s$.
	Since $[u_1, …, u_s]$ is linear, the number of such occurrences is smaller or equal to $s'$ and (since $G$ is a $k$-MCFG) also smaller or equal to $k$.
	It is easy to see that in the part of the run where the stack pointer is never below $ρ$, the states $⟨r, i, 1⟩, …, ⟨r, i, \lvert u_i \rvert⟩$ occur in that order whenever the stack pointer is at $ρ$ and, in particular, none of those states occur twice.
	Therefore, we have that $c_{ℳ(G)}(θ)(ρκ) ≤ k$ for every run $θ$ and, since $ρ$ and $κ$ were chosen arbitrarily, we have that for any non-empty position $ρ' ≠ ε$ and every run $θ$ holds $c_{ℳ(G)}(θ)(ρ') ≤ k$.
	Since the position $ε$ can never be entered from below, we have that $ℳ(G)$ is $k$-restricted.
\end{proof}

\begin{lemma}\label{lem:PMCFL_toAutomaton_subseteq}
	$L(G) ⊆ L(ℳ(G))$.
\end{lemma}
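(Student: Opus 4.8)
The plan is to show that $\mathcal{M}(G)$ has a valid run recognising $\llbracket d\rrbracket$ for every complete derivation $d\in\mathrm{D}_G^{\mathrm c}$; since $L(G)=\{\llbracket d\rrbracket\mid d\in\mathrm{D}_G^{\mathrm c}\}$, this gives the inclusion. The heart of the argument is an auxiliary claim, proved by structural induction on derivations, describing how a single subderivation is simulated by the automaton at a fixed stack node. The top-level run for $d$ is then obtained by wrapping this simulation between the transition $\mathrm{init}$ and a final $\mathrm{suspend}$.

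First I would fix the induction hypothesis for a derivation $d\in\mathrm{D}_G(A)$ with root rule $r=A\to[v_1,\dots,v_{s'}](B_1,\dots,B_{\ell'})$, children derivations $d_1,\dots,d_{\ell'}$ (with root rules $r_1,\dots,r_{\ell'}$), and $\llbracket d\rrbracket=(w_1,\dots,w_{s'})$. Fix a stack node $\rho$ carrying a ``return address'' label $q$ and whose child slots $\rho1,\dots,\rho\ell'$ are initially vacant. Call the subtree below $\rho$ \emph{clean} if every already-visited child $\rho\kappa$ is labelled by $r_\kappa$ and every not-yet-visited child is absent. The claim to prove is: starting from any clean configuration with pointer at $\rho$, an \emph{invocation} of component $m$ (a run from state $\langle r,m,0\rangle$ to state $\langle r,m,|v_m|\rangle$ with the pointer returning to $\rho$) reads exactly $w_m$ and leaves the subtree below $\rho$ clean. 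Crucially, the claim must permit components to be invoked in arbitrary order and arbitrarily often, since the calling rule may permute and copy its argument components.

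The inductive step is the main obstacle. Processing $v_m$ amounts to scanning it left to right: each terminal is consumed by a $\mathrm{read}$ transition, and each variable $x_\kappa^{m'}$ triggers an invocation of component $m'$ of the child $d_\kappa$. Here I must distinguish the \emph{first} reference to child $\kappa$ (across all components handled so far), realised by $\mathrm{call}$, which creates $\rho\kappa$ and stores the return address, from every \emph{later} reference, realised by $\mathrm{resume}$, which finds $\rho\kappa$ labelled $r_\kappa$ and overwrites it with the return address; in both cases the matching $\mathrm{suspend}$ restores the label $r_\kappa$ and returns the pointer to $\rho$. Each child invocation is justified by the induction hypothesis applied to $d_\kappa$ at node $\rho\kappa$, which supplies a run reading $w_\kappa^{m'}$ and preserves cleanliness of the grandchild subtree. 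Chaining these segments and using that $v_m$ evaluates under the composition function to the concatenation of the read terminals and the substituted child components $w_\kappa^{m'}$, I obtain a run reading exactly $w_m$. Cleanliness of the subtree below $\rho$ is preserved because between any two invocations of the same child its label stays $r_\kappa$ and its subtree is untouched by the parent. The base case $\ell'=0$ is immediate: each $v_m$ is then a terminal string read off by a chain of $\mathrm{read}$ transitions, and the empty subtree is trivially clean.

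Finally I would assemble the top-level run. For a complete derivation $d$ with $S\in I$ and root rule $r=S\to[u](\dots)$, the transition $\mathrm{init}(r)$ creates node $1$ with label $\Box$ and moves to $\langle r,1,0\rangle$; the induction hypothesis, invoking the single component $m=1$ from the clean initial configuration, yields a run reading $w_1=\llbracket d\rrbracket$ and ending in $\langle r,1,|u|\rangle$; and $\mathrm{suspend}_1(r,1,\Box)\,\mathrm{suspend}_2(\Box)$ relabels node $1$ and returns the pointer to $\varepsilon$, entering the final state $\Box$. This is a valid run recognising $\llbracket d\rrbracket$, which finishes the proof. The point that I expect to require the most care is exactly the label bookkeeping under permutation and copying: guaranteeing that whichever child component is demanded, in whatever order and however often, the corresponding node is in the expected state, namely fresh for $\mathrm{call}$ or labelled $r_\kappa$ for $\mathrm{resume}$.
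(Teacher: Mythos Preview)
Your approach is essentially the paper's: structural induction on derivations, building for each subderivation a tuple of run fragments (one per component), then wrapping the single fragment of a complete derivation between $\mathrm{init}(r)$ and $\mathrm{suspend}(r,1,\Box)$.

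There is one genuine difference worth noting. The paper's proof decides between $\mathrm{call}$ and $\mathrm{resume}$ purely from the component index: a variable $x_i^1$ triggers $\mathrm{call}$, any $x_i^j$ with $j\ge 2$ triggers $\mathrm{resume}$. That is correct only when, in the composition function, the variables of each child occur in the order $x_i^1,x_i^2,\dots$ without repetition; for general PMCFG (with permutation or copying) the paper's rule can issue a $\mathrm{resume}$ before the child node exists or a second $\mathrm{call}$ on an already-created node. Your rule—$\mathrm{call}$ on the first-ever reference to child $\kappa$, $\mathrm{resume}$ thereafter—together with your ``clean subtree'' invariant is exactly what is needed to make the inductive step go through in the full PMCFG setting advertised by \cref{con:PMCFL-toAutomaton}. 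So your version is not just more detailed; it actually repairs a small gap in the paper's sketch. The only thing you should tighten is the statement of cleanliness: make it recursive (visited child $\rho\kappa$ is labelled $r_\kappa$ \emph{and} the subtree below $\rho\kappa$ is clean with respect to $d_\kappa$), since you rely on this when re-entering a child via $\mathrm{resume}$.
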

\begin{proof}
	For every $A ∈ N$ and every derivation $d = r(d_1, …, d_m) ∈ \mathrm{D}_G(A)$ where $\sort(r) = (s_1 ⋯ s_m, s)$ and $r = A → [u_1, …, u_s](B_1, …, B_m)$, we recursively construct a tuple $(θ^1, …, θ^s)$ of runs in $ℳ(G)$.
	For the derivations $d_1, …, d_m$ we already have the tuples $(θ_1^1, …, θ_1^{s_1}), …, (θ_m^1, …, θ_m^{s_m})$, respectively.
	For every $κ ∈ [s]$, let $u_κ = ω_1⋯ω_{\ell}$ where $ω_1, …, ω_{\ell} ∈ Σ ∪ X$.
	We define $θ_κ = ω_1' ⋯ ω_{\ell}'$ as the run in $ℳ(G)$ such that for every $κ' ∈ [\ell]$, we have that
		$ω_{κ'}' = \mathrm{read}(r, κ, κ')$ if $ω_{κ'} ∈ Σ$,
		$ω_{κ'}' = \mathrm{call}(r, κ, κ', r')\, θ_i^1\, \mathrm{suspend}(r', 1, ⟨r, κ, κ'⟩)$ if $ω_{κ'} = x_i^1$ for some $i ≥ 1$, and
		$ω_{κ'}' = \mathrm{resume}(r, κ, κ', r')\, θ_i^j\, \mathrm{suspend}(r', j, ⟨r, κ, κ'⟩)$ if $ω_{κ'} = x_i^j$ for some $i ≥ 1$ and $j ≥ 2$,
		where $r' = d_i(ε)$.
	We can prove by structural induction on $d$ that $⟦d⟧ = (⟦θ^1⟧, …, ⟦θ^s⟧)$.
	If $d ∈ \mathrm{D}_G^{\text{c}}$, then $s$ is 1 and hence the valid run $\mathrm{init}(r)\, θ^1\, \mathrm{suspend}(r, 1, \Box)$ recognises exactly $⟦d⟧$.
\end{proof}

\begin{lemma}\label{lem:phi(rho)}
	Let $τ_1, …, τ_n ∈ δ$ with $θ = τ_1⋯τ_n ∈ R_{ℳ(G)}$ and let $ρ ∈ ℕ_+^* ∖ \{ε\}$.
	There is a rule $φ_θ(ρ)$ in $G$ such that, during the run $θ$, the automaton $ℳ(G)$ is in some state $⟨φ_θ(ρ), i, j⟩ ∈ \bar{R}$ whenever the stack pointer is at $ρ$.
\end{lemma}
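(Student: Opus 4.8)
The plan is to prove a slightly stronger statement by induction on the length of the prefix of $θ$ read so far, simultaneously tracking the rule that governs each stack position and the symbol stored there. For a run beginning in the initial configuration, every non-root position is created exactly once --- by the unique $\mathrm{init}(r')$ (for $ρ = 1$) or $\mathrm{call}(r, i, j, r')$ (for a proper child $ρ$) whose $\push$-instruction first makes $ρ$ part of the stack's domain --- since no instruction ever removes a position and $\push_κ$ only applies at a vacant child. I would therefore \emph{define} $φ_θ(ρ) := r'$ to be the rule carried by that creating transition, and write $r_ρ := φ_θ(ρ)$.

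I would strengthen the claim to two invariants, to be maintained after every prefix: (A) whenever the pointer is at a position $σ ≠ ε$ and the current state lies in $\bar R$, that state has the form $⟨r_σ, i, j⟩$; and (B) for every created position $σ ≠ ε$ the symbol stored at $σ$ is either the rule $r_σ ∈ R$ itself or a \emph{return address} $⟨r_{\mathrm{par}(σ)}, i, j⟩ ∈ \bar R$ recording the rule of the parent of $σ$ (with $\Box$ in this role when $σ = 1$). Invariant (A) is exactly the lemma; (B) is the bookkeeping that drives the argument. The base case (empty prefix, pointer at $ε$, state $\Box$) is vacuous.

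The induction cases on the last transition. The transitions $\mathrm{read}$, $\mathrm{init}$, $\mathrm{call}$, $\mathrm{resume}_1$ and $\mathrm{suspend}_1$ are routine: $\mathrm{read}$ fixes the pointer and all symbols and only advances the third component of the state; $\mathrm{init}$ and $\mathrm{call}$ create a fresh child $σκ$ whose pushed symbol is, by the hypothesis for (A) at $σ$, the return address $⟨r_σ, i, j⟩$, while its new state $⟨r', m, 0⟩$ records $r_{σκ} = r'$ as (A) demands; and $\mathrm{resume}_1$, $\mathrm{suspend}_1$ enter the states $q_+$, $q_-$ outside $\bar R$, so only (B) is at stake. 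The two decisive cases are $\mathrm{resume}_2$ and $\mathrm{suspend}_2$. In $\mathrm{resume}_2(r, i, j, r')$ the predicate $\equals(r')$ inspects the symbol at $σ$; by (B) this symbol is either $r_σ$ or a return address, and since return addresses live in $\bar R$ whereas $r' ∈ R$ and $R ∩ \bar R = ∅$, the predicate can only hold for the rule alternative, forcing $r' = r_σ$ and re-entering $σ$ in state $⟨r_σ, m, 0⟩$; the symbol it writes back is $⟨r, i, j⟩$ with $r = r_{\mathrm{par}(σ)}$ by the hypothesis at the parent, preserving (B). In $\mathrm{suspend}_2(q)$ with $q = ⟨r, i, j⟩$, the preceding $\mathrm{suspend}_1$ matched $q$ against the symbol at the child, which by (B) is the return address $⟨r_σ, i, j⟩$; hence $\down$ re-enters $σ$ in state $⟨r_σ, i, j⟩$, as (A) requires.

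The main obstacle is precisely the dual role of the stored symbol captured by (B): while a position is being processed its cell holds a pointer back into the parent rule, but once the position is suspended its cell holds that position's own rule, and these two regimes must stay synchronised with the finite-state component across arbitrarily nested calls and resumptions. Once (B) pins this down, the disjointness $R ∩ \bar R = ∅$ does the real work, since it prevents a $\mathrm{resume}_2$ from ever firing at $σ$ against anything but the committed rule $r_σ$; consequently the rule governing $σ$ cannot drift between successive visits, which is exactly the content of (A) and hence of the lemma.
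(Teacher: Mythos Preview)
Your proposal is correct and follows essentially the same line as the paper's proof, which is given only as a sketch: the paper also argues that $\varphi_\theta(\rho)$ is fixed at the first $\mathrm{call}$ (or $\mathrm{init}$), that $\mathrm{suspend}_1$ stores this rule so that a later $\mathrm{resume}_2$ can recover it, and that $\mathrm{call}$/$\mathrm{resume}_2$ store the parent's return address so that $\mathrm{suspend}$ can jump back correctly. Your explicit invariant (B) and the appeal to $R \cap \bar R = \varnothing$ make precise what the paper leaves implicit in the phrases ``a previous $\mathrm{suspend}_1$ has stored $\varphi_\theta(\rho)$'' and ``a previous $\mathrm{call}$ or $\mathrm{resume}_2$ has stored an appropriate state''; this is a more careful write-up of the same argument rather than a different route.
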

\begin{proof}
	The rule $φ_θ(ρ)$ is selected when $ρ$ is first reached (with $\mathrm{call}$).
	Then whenever we enter $ρ$ with $\mathrm{resume}$, a previous $\mathrm{suspend}_1$ has stored $φ_θ(ρ)$ at position $ρ$ and $\mathrm{resume}_2$ enforces the claimed property.
	The claimed property is preserved by $\mathrm{read}$.
	And whenever we enter $ρ$ with $\mathrm{suspend}$, a previous $\mathrm{call}$ or $\mathrm{resume}_2$ has stored an appropriate state in the stack and $\mathrm{suspend}$ merely jumps back to that state, observing the claimed property.
\end{proof}

Examining the form of runs in $ℳ(G)$ (\cref{con:PMCFL-toAutomaton}) and using \cref{lem:phi(rho)} we observe:

\begin{lemma}\label{lem:states}
	Let $τ, τ' ∈ δ$, $q, q', q'' ∈ Q$, $ξ, ξ', ξ'' ∈ \TS{Γ}$, $ρ ∈ ℕ_+^*$, $i ∈ ℕ_+$, and $φ_θ(ρi)$ be of the form $A → [u_1, …, u_s](A_1, …, A_{\ell})$.
  Then:
	\begin{enumerate}
		\item\label{item:states_a}
			If $(q', (ξ', ρ)) ⊢_τ (q, (ξ, ρi))$ with $q ∈ \bar{R}$, then $q = ⟨φ_θ(ρi), j, 0⟩$ for some $j ∈ [s]$ and $τ$ must be either an $\mathrm{init}$- or $\mathrm{call}$-transition.
		\item\label{item:states_a1}
			If $(q'', (ξ'', ρ)) ⊢_τ (q', (ξ', ρi)) ⊢_{τ'} (q, (ξ, ρi))$ with $q' ∈ \{ q_+ ∣ q ∈ \bar{R} \}$, then $q = ⟨φ_θ(ρi), j, 0⟩$ for some $j ∈ [s]$, $τ$ is a $\mathrm{resume}_1$-transition, and $τ'$ is a $\mathrm{resume}_2$-transition.
		\item\label{item:states_b}
			If $(q, (ξ, ρi)) ⊢_τ (q', (ξ', ρi)) ⊢_{τ'} (q'', (ξ'', ρ))$, then $q = ⟨φ_θ(ρi), j, \lvert u_j \rvert⟩$ for some $j ∈ [s]$, $τ$ is a $\mathrm{suspend}_1$-transition, and $τ'$ is a $\mathrm{suspend}_2$-transition.
	\end{enumerate}
\end{lemma}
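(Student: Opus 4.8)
The plan is to reduce all three claims to one bookkeeping observation about the transitions of \cref{con:PMCFL-toAutomaton}: each transition is uniquely classified by (i) the effect of its instruction on the \emph{depth} of the stack pointer and (ii) the family of its source and target states. The state set $Q$ partitions into the three pairwise disjoint families consisting of the plain states $\bar R \cup \{\Box\}$, the $+$-states $\{q_+ \mid q \in \bar R \cup \{\Box\}\}$, and the $-$-states $\{q_- \mid q \in \bar R \cup \{\Box\}\}$, kept apart by their subscripts. First I would record the pointer behaviour: the $\push$- and $\up$-instructions, carried exactly by the $\mathrm{init}$-, $\mathrm{call}$-, and $\mathrm{resume}_1$-transitions, move the pointer one level up; the $\down$-instruction, carried exactly by the $\mathrm{suspend}_2$-transitions, moves it one level down; and every remaining transition ($\mathrm{read}$, $\mathrm{resume}_2$, $\mathrm{suspend}_1$) uses $\id$ or $\set$ and leaves the pointer fixed. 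Then I would read off from the construction the relevant source/target families: $\mathrm{init}$ and $\mathrm{call}$ land in a plain state $\langle r_0, j, 0\rangle$; $\mathrm{resume}_1$ lands in a $+$-state; $\mathrm{resume}_2$ is the only transition whose \emph{source} is a $+$-state, and it lands in $\langle r_0, m, 0\rangle$; and the $\mathrm{suspend}_1$-transitions are the only ones landing in a $-$-state, each leaving from a state of the form $\langle r_0, j, \lvert u_j \rvert\rangle$.

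For \cref{item:states_a}, the step $(q', (\xi', \rho)) \vdash_\tau (q, (\xi, \rho i))$ raises the pointer, so $\tau$ is an $\mathrm{init}$-, $\mathrm{call}$-, or $\mathrm{resume}_1$-transition; since $\mathrm{resume}_1$ would force $q$ to be a $+$-state, contradicting $q \in \bar R$, the transition $\tau$ is an $\mathrm{init}$- or $\mathrm{call}$-transition and $q = \langle r_0, j, 0\rangle$ in either case. As the pointer lands at $\rho i$, \cref{lem:phi(rho)} identifies $r_0 = \phi_\theta(\rho i)$, so $q = \langle \phi_\theta(\rho i), j, 0\rangle$ with $j$ ranging over the components of that rule, i.e.\ $j \in [s]$.

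For \cref{item:states_a1}, the intermediate state $q'$ is a $+$-state, and since $\mathrm{resume}_1$ is the only transition whose target is a $+$-state, $\tau$ is a $\mathrm{resume}_1$-transition; the second step keeps the pointer at $\rho i$ and departs from $q'$, and since $\mathrm{resume}_2$ is the only transition with a $+$-state as source, $\tau'$ is a $\mathrm{resume}_2$-transition. Its target is $\langle r_0, m, 0\rangle$, and \cref{lem:phi(rho)} gives $r_0 = \phi_\theta(\rho i)$, whence $q = \langle \phi_\theta(\rho i), j, 0\rangle$ with $j = m \in [s]$. For \cref{item:states_b}, the composite step keeps the pointer at $\rho i$ and then lowers it to $\rho$: the lowering step $\tau'$ carries a $\down$-instruction, so it is a $\mathrm{suspend}_2$-transition whose source is a $-$-state; hence $q'$ is a $-$-state, and the fixed-pointer step $\tau$ landing in $q'$ can only be a $\mathrm{suspend}_1$-transition, with source $\langle r_0, j, \lvert u_j \rvert\rangle$. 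As the pointer sits at $\rho i$ in state $q$, \cref{lem:phi(rho)} gives $r_0 = \phi_\theta(\rho i)$, and thus $q = \langle \phi_\theta(\rho i), j, \lvert u_j \rvert\rangle$ with $j \in [s]$.

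I do not expect a deep obstacle here: once the classification table is in place, each of the three claims is forced by it, and \cref{lem:phi(rho)} supplies the rule in every case. The only point demanding care is to keep the three state families strictly disjoint and to read off correctly, for each of the finitely many transition shapes, both its pointer effect and the family of its target — and, for $\mathrm{resume}_2$, of its source — since a slip in that table is the one way the argument could go wrong.
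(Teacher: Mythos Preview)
Your proposal is correct and follows essentially the same approach as the paper: both proofs classify the transitions of \cref{con:PMCFL-toAutomaton} by their effect on the stack pointer and by the families of their source and target states, and then invoke \cref{lem:phi(rho)} to pin down the rule component. You are more explicit in laying out the classification table up front, whereas the paper argues each case directly, but the logical content is identical.
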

\begin{proof}
	(for~\ref{item:states_a},~\ref{item:states_a1}, and~\ref{item:states_b})~%
	The first projection of $q$ is $φ_θ(ρi)$ due to \cref{lem:phi(rho)}.
	
	(for~\ref{item:states_a})~%
	We only move the stack pointer to a child position and simultaneously go to a state from the set $\bar{R}$ when making a $\mathrm{init}$ or a $\mathrm{call}$ transition.
	From the definition of $\mathrm{init}$ and $\mathrm{call}$ transitions we know that the third projection of $q$ is 0.
	
	(for~\ref{item:states_a1})~%
	We only move the stack pointer to a child position and simultaneously go to a state from the set $\{ q_+ ∣ q ∈ \bar{R} \}$ when making a $\mathrm{resume}_1$ transition.
	Every $\mathrm{resume}_1$ transition is followed by a $\mathrm{resume}_2$ transition.
	From the definition of $\mathrm{resume}_2$ transitions we know that the third projection of $q$ is 0.
	
	(for~\ref{item:states_b})~%
	We only move the stack pointer to a parent position when making a $\mathrm{suspend}_2$ transition.
	Every $\mathrm{suspend}_2$ transition is preceded by a $\mathrm{suspend}_1$ transition.
	From the definition of $\mathrm{suspend}_1$ transitions we know that the third projection of $q$ is $\lvert u_j \rvert$.
\end{proof}

\begin{lemma}\label{lem:PMCFL_toAutomaton_supseteq}
	$L(G) ⊇ L(ℳ(G))$ if $G$ only has productive non-terminals.
\end{lemma}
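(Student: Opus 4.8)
The plan is to invert the construction from the proof of \cref{lem:PMCFL_toAutomaton_subseteq}: from a valid run $\theta$ of $ℳ(G)$ recognising $w = \llbracket\theta\rrbracket$, I extract a complete derivation $d \in \mathrm{D}_G^{\text{c}}$ with $\llbracket d\rrbracket = w$. A nonempty valid run must begin with an $\mathrm{init}(r)$-transition for some initial rule $r$ (these are the only transitions leaving the initial state $\Box$) and end with the matching $\mathrm{suspend}(r, 1, \Box)$ that returns to the final state $\Box$ with the stack pointer at $\varepsilon$. The guiding idea is that the stack built during $\theta$ encodes a derivation tree: by \cref{lem:phi(rho)} every non-root position $\rho$ reached during $\theta$ carries a well-defined rule $\varphi_\theta(\rho)$, so for each visited stack position $\rho$ I let $d_\rho$ denote the subderivation read off the subtree of the final stack rooted at $\rho$, with $\varphi_\theta$ supplying the node labels. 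The top-level derivation is then $d_1$, rooted at stack position $1$.

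First I would establish well-formedness. By parts~\ref{item:states_a} and~\ref{item:states_a1} of \cref{lem:states}, a child position $\rho\kappa$ can only be entered via an $\mathrm{init}$-, $\mathrm{call}$-, or $\mathrm{resume}$-transition, each of which forces the $\kappa$-th argument of $\varphi_\theta(\rho)$ to be the left-hand side of $\varphi_\theta(\rho\kappa)$; hence labels along the tree compose exactly as rules do in a derivation. The remaining gap is completeness: an argument $\kappa$ of a rule is visited only if some variable $x_\kappa^m$ occurs in a component of that rule that is actually processed during $\theta$, so arguments corresponding to deleted or never-referenced variables may leave holes in $d_\rho$. This is precisely where the hypothesis on productive non-terminals enters: each hole sits below a productive non-terminal $A_\kappa$, so I may plug in an arbitrary complete derivation from $\mathrm{D}_G(A_\kappa) \neq \varnothing$. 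The components contributed by these inserted subderivations are never read during $\theta$, so they do not affect $w$.

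The heart of the proof is the following claim, proved by induction on the height of the subtree rooted at $\rho$ (equivalently, on the nesting depth of the corresponding block of $\theta$): whenever $\theta$ contains a contiguous block that enters state $⟨\varphi_\theta(\rho), i, 0⟩$ at position $\rho$, never moves the pointer to a proper ancestor of $\rho$, and leaves state $⟨\varphi_\theta(\rho), i, \lvert u_i\rvert⟩$ at $\rho$, then the input read by that block equals the $i$-th component of $\llbracket d_\rho\rrbracket$. By the observation used in the proof of \cref{lem:MCFL_Automaton_restricted}, inside such a block the states $⟨\varphi_\theta(\rho), i, 1⟩, \dots, ⟨\varphi_\theta(\rho), i, \lvert u_i\rvert⟩$ are passed in order while the pointer is at $\rho$; each step either reads the corresponding terminal of $u_i$ (a $\mathrm{read}$-transition) or, for a variable $x_\kappa^m$, descends to the child $\rho\kappa$ via $\mathrm{call}$/$\mathrm{resume}$, processes its $m$-th component, and returns via $\mathrm{suspend}$ (using part~\ref{item:states_b} of \cref{lem:states} to see that this sub-block is well-nested). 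The inductive hypothesis identifies the word read in the child sub-block with the $m$-th component of $\llbracket d_{\rho\kappa}\rrbracket$, so the block at $\rho$ reads exactly $u_i$ with each $x_\kappa^m$ replaced by that component, i.e.\ the $i$-th component of $\llbracket d_\rho\rrbracket$. Note that copying poses no difficulty here: repeated occurrences of the same $x_\kappa^m$ trigger several visits, each reading the same fixed string $\llbracket d_{\rho\kappa}\rrbracket$ in its $m$-th component. Applying the claim to the top-level block, which processes the single component at position $1$ of an initial rule of fan-out $1$, yields $w = \llbracket\theta\rrbracket$ equal to the lone component of $\llbracket d_1\rrbracket$, and $d_1 \in \mathrm{D}_G^{\text{c}}$; hence $w \in L(G)$.

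The main obstacle I anticipate is making the block decomposition rigorous: one must show that a valid run really factors into well-nested $\mathrm{call}$/$\mathrm{suspend}$ (and $\mathrm{resume}$/$\mathrm{suspend}$) blocks, so that the recursion above is well-founded and each referenced child component is read by exactly one sub-block. \cref{lem:states} pins down the local shape of each descent and ascent, but turning these local facts into a clean global induction, while simultaneously keeping the bookkeeping straight between copied occurrences (forcing identical reads) and deleted arguments (left as productively-fillable holes), is the delicate part of the argument.
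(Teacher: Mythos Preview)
Your proposal is correct and follows essentially the same route as the paper: define the partial tree of rules $\varphi_\theta$ from the run, extend it to a full derivation $d$ using productivity, and prove by structural induction that each maximal block of $\theta$ with stack pointer at or below $\rho$ reads exactly the appropriate component of $\llbracket d|_\rho\rrbracket$. The paper formalises your ``well-nested block'' as a maximal interval $[a,b]$ of indices whose stack pointers all have prefix $\rho$, which is precisely the clean decomposition you identify as the delicate point; with \cref{lem:states} supplying the local shape at the boundaries, the induction goes through exactly as you outline.
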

\begin{proof}
	For every run $θ ∈ R_{ℳ(G)}$ we define $φ_θ': ℕ^* → R$ by $φ_θ'(ρ) = φ_θ(1ρ)$ for every $ρ ∈ ℕ_+^*$ with $1ρ ∈ \dom(φ_θ)$ (cf. \cref{lem:phi(rho)}).
	Then $φ_θ'$ is a tree.
	One could show for every $d ∈ \mathrm{D}_G$ with $d ⊇ φ_θ'$ by structural induction on $φ_θ'$ that for every $ρ ∈ \dom(φ_θ')$ and every maximal interval $[a, b]$ where $ρ_a, …, ρ_b$ have prefix $ρ$, we have
	\( ⟦τ_a⋯τ_b⟧ = ⟦d|_ρ⟧_m \)
	with $q_a = ⟨φ_θ'(ρ), m, 0⟩$ for some $m ∈ ℕ_+$.
	Let us call this property ($\dagger$).
	Let $τ_1, …, τ_n ∈ δ$ with $θ = τ_1⋯τ_n ∈ R_{ℳ(G)}^{\text{v}}$.
	Consider the run
	\( (\Box, (@, ε)) ⊢_{τ_1} (q_1, (ξ_1, 1ρ_1)) ⊢_{τ_2} … ⊢_{τ_{n-1}} (q_{n-1}, (ξ_{n-1}, 1ρ_{n-1})) ⊢_{τ_n} (\Box, (ξ_n, ε)) \).
  By ($\dagger$) we obtain that $⟦τ_2⋯τ_{n-1}⟧ = ⟦d⟧$.
	By \cref{lem:states} and the fact that only an $\mathrm{init}$-transition may start from $\Box$ we obtain that $τ_1$ is an $\mathrm{init}$-transition and $τ_n$ is a $\mathrm{suspend}_2$-transition.
	Thus $⟦τ_1⟧ = ε = ⟦τ_n⟧$ and therefore $⟦θ⟧ = ⟦d⟧$.

  It remains to proof ($\dagger$).
	For this we will denote the $i$-th component of the tuple generated by a derivation $d$ as $⟦d⟧_i$.
	Let $φ_θ'(ρ) = A → [u_1, …, u_s](A_1, …, A_{\ell})$.
	Consider the run
	\(
		(q_{a-1}, (ξ_{a-1}, 1ρ_{a-1}))
		⊢_{τ_a} (q_a, (ξ_a, 1ρ_a))
		⊢_{τ_{a+1}} …
		⊢_{τ_b} (q_b, (ξ_b, 1ρ_b))
	\).
	Since $[a, b]$ is maximal and a transition can add at most one symbol to the stack pointer, we know that $ρ_a = ρ = ρ_b$.
	By \cref{lem:states} we also know that $q_a = ⟨φ_θ'(ρ), m, 0⟩$ and $q_b = ⟨φ_θ'(ρ), m, \lvert u_m \rvert⟩$ for some $m ∈ [s]$.
	We now define the strings $w_1, …, w_{\lvert u_m \rvert}$ for every $i ∈ [\lvert u_m \rvert]$
	\begin{enumerate}
		\item\label{case:PMCFL_toAutomaton_supseteq:Sigma}
			as $w_i = σ$ if $(u_m)_i = σ$ for some $σ ∈ Σ$ and
		\item\label{case:PMCFL_toAutomaton_supseteq:X}
			as $w_i = ⟦d|_{ρκ}⟧_j$ if $(u_m)_i = x^j_κ$ for some $x^j_κ ∈ X$.
	\end{enumerate}
	From \cref{con:PMCFL-toAutomaton} we know that if we are in Case~\ref{case:PMCFL_toAutomaton_supseteq:Sigma}, then the only possibility to continue from state $⟨φ_θ'(ρ), m, i - 1⟩$ is to use the transition $\mathrm{read}(φ_θ'(ρ), m, i)$.
	We then end up in state $⟨φ_θ'(ρ), m, i⟩$.
	If we are in Case~\ref{case:PMCFL_toAutomaton_supseteq:X} and in state $⟨φ_θ'(ρ), m, i - 1⟩$, then, due to \cref{con:PMCFL-toAutomaton,lem:phi(rho)}, we can only continue with either (depending on the current stack)
	\(
		\mathrm{call}(φ_θ'(ρ), m, i, φ_θ'(ρκ))
	\) or
	\(
		\mathrm{resume}(φ_θ'(ρ), m, i, φ_θ'(ρκ))
	\).
	By induction hypothesis we know that after executing either of the above runs, the automaton will recognise $⟦d|_{ρκ}⟧_j = w_i$ and set the stack pointer to $ρ$.
	Then (by \cref{con:PMCFL-toAutomaton}) the automaton is in state $⟨φ_θ'(ρ), m, i + 1⟩$.
	Repeating the step above eventually brings $ℳ(G)$ to the state $⟨φ_θ'(ρ), m, \lvert u_m \rvert⟩$ where only some $\mathrm{suspend}$-transition is applicable.
	Thus $⟦τ_a⋯τ_b⟧ = w_1⋯w_{\lvert u_m \rvert} = ⟦d|_ρ⟧_m$.
\end{proof}

\begin{proposition}\label{prop:PMCFL_toAutomaton}
	$L(G) = L(ℳ(G))$ if $G$ only has productive non-terminals.
\end{proposition}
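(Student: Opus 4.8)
The statement asserts an equality of languages under the hypothesis that $G$ has only productive non-terminals, and the natural way to establish a set equality is to prove the two inclusions separately. Conveniently, both inclusions have already been isolated as lemmas in this section, so the plan is simply to combine them. The inclusion $L(G) ⊆ L(ℳ(G))$ is exactly \cref{lem:PMCFL_toAutomaton_subseteq}, which holds unconditionally; the reverse inclusion $L(G) ⊇ L(ℳ(G))$ is \cref{lem:PMCFL_toAutomaton_supseteq}, which requires the productivity hypothesis. Putting the two together yields $L(G) = L(ℳ(G))$ under that hypothesis, which is precisely the claim.

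There is essentially no obstacle left to overcome at this point, since all the work resides in the two lemmas; the value of the proposition is merely to package them as a single equivalence result. It is worth recording, though, why the productivity assumption enters only through the reverse inclusion, as this explains the asymmetry of the hypothesis. For $⊆$ one starts from a derivation of $G$ and constructs a matching run of $ℳ(G)$, so no restriction on $G$ is needed. For $⊇$ one instead starts from a valid run $θ$ of $ℳ(G)$ and reads off a partial tree $φ_θ'$ over the rules of $G$ (cf. \cref{lem:phi(rho)}); to turn this into an actual derivation $d ⊇ φ_θ'$ via property~($\dagger$), one must be able to complete every leaf of $φ_θ'$ to a full derivation of the corresponding non-terminal, and this is exactly what productivity guarantees. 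Hence the proposition inherits its hypothesis solely from \cref{lem:PMCFL_toAutomaton_supseteq}.
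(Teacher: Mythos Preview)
Your proof is correct and matches the paper's own argument exactly: the paper also just invokes \cref{lem:PMCFL_toAutomaton_subseteq,lem:PMCFL_toAutomaton_supseteq} and concludes. Your added paragraph explaining why the productivity hypothesis is needed only for the $\supseteq$ direction is accurate and a nice clarification beyond what the paper states.
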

\begin{proof}
	The claim follows directly from \cref{lem:PMCFL_toAutomaton_subseteq,lem:PMCFL_toAutomaton_supseteq}.
\end{proof}

\paragraph{$ℳ(G)$ is almost a parser for $G$.}
Let $(ξ, ε)$ be a storage configuration of $ℳ(G)$ after recognising some word $w$ and let $ξ|_1$ be the first subtree of $ξ$, defined by the equation $ξ|_1(ρ) = ξ(1ρ)$.
Then every complete derivation $d$ in $G$ with $ξ|_1 ⊆ d$ generates $w$.
If $G$ only contains rules with non-deleting composition functions, we even have that $ξ|_1$ \emph{is} a derivation in $G$ generating $w$.
In \cref{fig:PMCFL_toAutomaton_subseteq}, for example, we see that $r_1(r_3, r_4(r_5))$ is a derivation of $bd$ in $G$ (cf.~\cref{ex:PMCFL_toAutomaton_subseteq}).

\subsection{Every restricted TSA has an equivalent MCFG}
\label{sec:automata_subseteq_MCFL}

We construct an MCFG $G'(ℳ)$ that recognises the valid runs of a given automaton $ℳ$, and then use the closure of MCFGs under homomorphisms.
A tuple of runs $(θ_1, …, θ_m)$ can be derived from non-terminal $⟨q_1, q_1', …, q_m, q_m'; γ_0, …, γ_m⟩$ iff the runs $θ_1, …, θ_m$ all return to the stack position they started from and never go below it, and $θ_i$ starts from state $q_i$ and stack symbol $γ_{i-1}$ and ends with $q_i'$ and $γ_i$ for every $i ∈ [m]$.
We start with an example.

\begin{figure}[t]
\simpleAutomaton
\repeatcaption{fig:automaton:run}{Set of transitions and a valid run in $ℳ$, cf.~\cref{ex:automaton}}
\end{figure}

\begin{example}
	Recall the TSA $ℳ$ from \cref{ex:automaton} (also cf. \cref{fig:automaton:run}).
  Note that $ℳ$ is cycle-free and in stack normal form.
	Let us consider position $ε$ of the stack.
	The only transitions applicable there are $τ_1$, $τ_2$, $τ_5$, and $τ_9$.
	Clearly, every valid run in $ℳ$ starts with $τ_1$ or $τ_2$ and ends with $τ_9$, every $τ_5$ must be preceded by $τ_4$ or $τ_3$, and every $τ_9$ must be preceded by $τ_8$ or $τ_7$.
	Thus each valid run in $ℳ$ is either of the form
	\(
		θ = τ_1 θ_1 τ_4 τ_5 θ_2 τ_8 τ_9
	\) or
	\(
		θ' = τ_2 θ_1' τ_3 τ_5 θ_2' τ_7 τ_9
	\)
	for some runs $θ_1$, $θ_2$, $θ_1'$, and~$θ_2'$.
	The target state of $τ_1$ is 1 and the source state of $τ_4$ is 2.
	Also $τ_1$ pushes a $*$ to position 1 and the predicate of $τ_4$ accepts only~$*$.
	Thus $θ_1$ must go from state 1 to 2 and from stack symbol $*$ to $*$ at position 1.
	Similarly, we obtain that $θ_2$, $θ_1'$, and $θ_2'$ go from state 3 to 4, 2 to 2, and 3 to 3, respectively, and from stack symbol $*$ to $*$, $\#$ to $\#$, and $\#$ to $\#$, respectively, at position 1.
	The runs $θ_1$ and $θ_2$ are linked since they are both executed while the stack pointer is in the first subtree of the stack; the same holds for $θ_1'$ and $θ_2'$.

	Clearly, linked runs need to be produced by the same non-terminal.
	For the pair $(θ_1, θ_2)$ of linked runs, we have the non-terminal $⟨1,2,3,4;*,*,*⟩$ 
	and for $(θ_1', θ_2')$ we have $⟨2,2,3,3;\#,\#,\#⟩$.
	Since $θ$ and $θ'$ go from state 1 to 5 and from storage symbol $@$ to $@$, we have the rules
	\begin{align*}
		⟨1,5;@,@⟩ &→ \big[ τ_1 x_1^1 τ_4 τ_5 x_1^2 τ_8 τ_9 \big]\big( ⟨1,2,3,4;*,*,*⟩ \big)\text{ and} \\*
		⟨1,5;@,@⟩ &→ \big[ τ_2 x_1^1 τ_3 τ_5 x_1^2 τ_7 τ_9 \big]\big( ⟨2,2,3,3;\#,\#,\#⟩ \big)\text{ in $G'(ℳ)$.}
	\end{align*}
	
	Next, we explore the non-terminal $⟨1,2,3,4;*,*,*⟩$, i.e. we need a run that goes from state 1 to 2 and from storage symbol $*$ to $*$ and another run that goes from state 3 to 4 and from storage symbol $*$ to $*$.
	There are only two kinds of suitable pairs of runs: $\big( τ_1 θ_1 τ_4, τ_6 θ_2 τ_8 \big)$ and $\big( τ_2 θ_1' τ_3, τ_6 θ_2' τ_7 \big)$ for some runs $θ_1$, $θ_2$, $θ_1'$, and~$θ_2'$.
	The runs $θ_1$, $θ_2$, $θ_1'$, and $θ_2'$ of this paragraph then have the same state and storage behaviour as in the previous paragraph and we have rules
	\begin{align*}
		⟨1,2,3,4;*,*,*⟩ &→ \big[ τ_1 x_1^1 τ_4, τ_6 x_1^2 τ_8 \big]\big( ⟨1,2,3,4;*,*,*⟩ \big) \text{ and} \\*
		⟨1,2,3,4;*,*,*⟩ &→ \big[ τ_2 x_1^1 τ_3, τ_6 x_1^2 τ_7 \big]\big( ⟨2,2,3,3;\#,\#,\#⟩ \big) \text{ in $G'(ℳ)$.}
	\end{align*}
	
	For non-terminal $⟨2,2,3,3;\#,\#,\#⟩$, we may only take the pair of empty runs and thus have the rule
	\(
		⟨2,2,3,3;\#,\#,\#⟩ → \big[ ε, ε \big]\big(\big)
	\) in $G'(ℳ)$.
\end{example}

For all $q, q' ∈ Q$, $γ, γ' ∈ Γ$, and $j ∈ ℕ_+$ we define the following sets:
\begin{align*}
	δ(q, q')|_{\up_j}^{γ↗∙}
		&= \{ (q, ω, p, \up_j, q') ∈ δ ∣ γ ∈ p \}\,\text{,} \\
	δ(q, q')|_{\push_j}^{γ↗γ'}
		&= \{ (q, ω, p, \push_j(γ'), q') ∈ δ ∣ γ ∈ p \}\,\text{, and} \\
	δ(q, q')|_{\down}^{γ↘∙}
		&= \{ (q, ω, p, \down, q') ∈ δ ∣ γ ∈ p \}\,\text{.}
\end{align*}

\begin{figure}[t]
\centering
\tikzset{ampersand replacement=\&, column sep=2em, inner ysep=0.1em, every node/.style={text height=.8em}}
\begin{tabular}{ccc}
	\begin{tikzpicture}
		\matrix (m) [matrix of math nodes] {
			γ \& γ' \\
			q \& q' \\
		};

		\draw [->, decorate] (m-1-1) to node [above, inner ysep=0.2em] {$\stay$} (m-1-2);
		\draw [->, densely dashed] (m-2-1) to (m-2-2);
	\end{tikzpicture}
	&
	\begin{tikzpicture}
		\matrix (m) [matrix of math nodes] {
			  \&    \& β \\
			γ \& γ' \&    \\
			q \&    \& q' \\
		};

		\draw [->, decorate] (m-2-1) to node [above, inner ysep=0.2em] {$\stay$} (m-2-2);
		\draw [->] (m-2-2) to node [above, sloped, inner sep=1pt] {$\push_j$} (m-1-3);
		\draw [->, densely dashed] (m-3-1) to (m-3-3);
	\end{tikzpicture}
	&
	\begin{tikzpicture}
		\matrix (m) [matrix of math nodes] {
				\&    \& β  \\
			γ \& γ' \&    \\
			q \&    \& q' \\
		};

		\draw [->, decorate] (m-2-1) to node [above, inner ysep=0.2em] {$\stay$} (m-2-2);
		\draw [->] (m-2-2) to node [above, sloped, inner sep=1pt] {$\up_j$} (m-1-3);
		\draw [->, densely dashed] (m-3-1) to (m-3-3);
	\end{tikzpicture}
	\\
	$R_ℳ(q, q')|_{\stay}^{γ → γ'}$
	&
	\multicolumn{2}{c}{$Ω_ℳ^{↑}(q, q'; γ, γ'; j, β)$}
	\\[.7em]
	\begin{tikzpicture}
		\matrix (m) [matrix of math nodes] {
			β' \&   \&    \\
			   \& γ \& γ' \\
			q  \&   \& q' \\
		};

		\draw [->] (m-1-1) to node [above, sloped, inner ysep=0.2em] {$\down$} (m-2-2);
		\draw [->, decorate] (m-2-2) to node [above, inner ysep=0.2em] {$\stay$} (m-2-3);
		\draw [->, densely dashed] (m-3-1) to (m-3-3);
	\end{tikzpicture}
	&
	\begin{tikzpicture}
		\matrix (m) [matrix of math nodes] {
			β' \&   \&    \& β  \\
			   \& γ \& γ' \&    \\
			q  \&   \&    \& q' \\
		};

		\draw [->] (m-1-1) to node [above, sloped, inner ysep=0.2em] {$\down$} (m-2-2);
		\draw [->, decorate] (m-2-2) to node [above, inner ysep=0.2em] {$\stay$} (m-2-3);
		\draw [->] (m-2-3) to node [above, sloped, inner sep=1pt] {$\push_j$} (m-1-4);
		\draw [->, densely dashed] (m-3-1) to (m-3-4);
	\end{tikzpicture}
	&
	\begin{tikzpicture}
		\matrix (m) [matrix of math nodes] {
			β' \&   \&    \& β  \\
			   \& γ \& γ' \&    \\
			q  \&   \&    \& q' \\
		};

		\draw [->] (m-1-1) to node [above, sloped, inner ysep=0.2em] {$\down$} (m-2-2);
		\draw [->, decorate] (m-2-2) to node [above, inner ysep=0.2em] {$\stay$} (m-2-3);
		\draw [->] (m-2-3) to node [above, sloped, inner sep=1pt] {$\up_j$} (m-1-4);
		\draw [->, densely dashed] (m-3-1) to (m-3-4);
	\end{tikzpicture}
	\\
	$Ω_ℳ^{↓}(q, q'; γ, γ'; β')$
	&
	\multicolumn{2}{c}{$Ω_ℳ^{↓↑}(q, q'; γ, γ'; β', j, β)$}
\end{tabular}
\caption{Groups of runs in $ℳ$ where dashed arrows signify the change of states and continuous arrows signify the change in the storage.}
\label{fig:puzzle-pieces}
\end{figure}

For every $q, q' ∈ Q$, $γ, γ' ∈ Γ ∪ \{@\}$, $β, β' ∈ Γ$, and $j ∈ ℕ_+$ we distinguish the following groups of runs (to help the intuition, they are visualised in \cref{fig:puzzle-pieces}):
\begin{enumerate}
	\item\label{item:up-run}
		A sequence of $\id$- or $\set$-instructions followed by an $\up$- or $\push$-instruction:
		\begin{align*}
			Ω_ℳ^{↑}(q, q'; γ, γ'; j, β)
				&= ⋃\nolimits_{\bar{q} ∈ Q} R_ℳ(q, \bar{q})|_{\stay}^{γ → γ'} ⋅ \big( δ(\bar{q}, q')|_{\push_j}^{γ' ↗ β} ∪ δ(\bar{q}, q')|_{\up_j}^{γ' ↗ ∙} \big)
		\end{align*}
	\item\label{item:down-run}
		A $\down$-instruction followed by $\id$- or $\set$-instructions:
		\[
			Ω_ℳ^{↓}(q, q'; γ, γ'; β')
				= ⋃\nolimits_{\bar{q} ∈ Q} δ(q, \bar{q})|_{\down}^{β' ↘ ∙} ⋅ R_ℳ(\bar{q}, q')|_{\stay}^{γ → γ'}
		\]
	\item\label{item:down-up-run}
		A $\down$-instruction, then a sequence of $\id$- or $\set$-instructions and finally an $\up$- or $\push$-instruction:
		\[
			Ω_ℳ^{↓↑}(q, q'; γ, γ'; β', j, β)
				= ⋃\nolimits_{\bar{q} ∈ Q} δ(q, \bar{q})|_{\down}^{β' ↘ ∙} ⋅ Ω_ℳ^{↑}(\bar{q}, q'; γ, γ'; j, β)
		\]
\end{enumerate}
The arguments of $Ω_ℳ^{↑}$, $Ω_ℳ^{↓}$, and $Ω_ℳ^{↓↑}$ are grouped using semicolons.
The first group describes the state behaviour of the run; the second group describes the storage behaviour at the parent position (i.e. the position of the $\set$- and $\id$-instructions), and the third group describes the storage behaviour at the child positions (i.e. the positions immediately above the parent position).

We build tuples of runs from the three groups above by matching the storage behaviour of neighbouring runs at the parent position.
A tuple $t = (θ_0, …, θ_{\ell})$ of runs is \emph{admissible}
	if $\ell = 0$ and $θ_0$ only uses $\id$- and $\set$-instructions; or
	if $\ell ≥ 1$, $θ_0$ is in group~\ref{item:up-run}, $θ_{\ell}$ is in group~\ref{item:down-run}, and for every $i ∈ [\ell]$, we have
\begin{align*}
	θ_{i-1} &∈ Ω_ℳ^{↑}(q, q'; γ, \bar{γ}; j, β) ∪ Ω_ℳ^{↓↑}(q, q'; γ, \bar{γ}; β', j, β)
	&&\text{and} \\*
	θ_i &∈ Ω_ℳ^{↓}(q'', q'''; \bar{γ}, γ'; β''') ∪ Ω_ℳ^{↓↑}(q'', q'''; \bar{γ}, γ'; β''', j', β'')
\end{align*}
for some $γ, \bar{γ}, γ' ∈ Γ ∪ \{@\}$, $β, β', β'', β''' ∈ Γ$,  $q, q', q'', q''' ∈ Q$, and $j, j' ∈ ℕ_+$.
Note that only the $\bar{γ}$ has to match.
Then $θ_{i-1}θ_i$ may \emph{not} be a run in $ℳ$ since it is not guaranteed that $q' = q''$ and $β = β'$.
We therefore say that there is a \emph{$(q', q'';j, β, β')$-gap between $θ_{i-1}$ and $θ_i$}.
Let $q_1, q_2 ∈ Q$ and $γ_1, γ_2 ∈ Γ ∪ \{@\}$.
We say that \emph{$t$ has type~$⟨q_1, q_2; γ_1, γ_2⟩$} if $\ell = 0$ and $θ_0 ∈ R_ℳ(q_1, q_2)|_{\stay}^{γ_1 → γ_2}$; or if $\ell ≥ 1$, the first transition in $θ_0$ has source state~$q_1$ and its predicate contains $γ_1$, the last transition of $θ_{\ell}$ has target state $q_2$, the last $\set$-instruction occurring in $t$, if there is one, is $\set(γ_2)$, and $γ_1 = γ_2$ if no $\set$-instruction occurs in $t$.
The set of admissible tuples in~$Ω_ℳ^*$ is denoted by $Ω_ℳ^{\star}$.
We define $t[y_1, …, y_{\ell}] = θ_0 y_1 θ_1 ⋯ y_{\ell} θ_{\ell}$ for every $y_1, …, y_{\ell} ∈ X$ to later fill the gaps with variables.

Let $T = (t_1, …, t_s) ∈ (Ω_ℳ^{\star})^*$ and $\ell_1, …, \ell_s$ be the counts of gaps in $t_1, …, t_s$, respectively.
For every $i ∈ [s]$ and $κ ∈ [\ell_i]$ we set $q_{(i,κ)}, q_{(i,κ)}' ∈ Q$, $β_{(i,κ)}, β_{(i,κ)}' ∈ Γ$, and $j_{(i,κ)} ∈ ℕ_+$ such that the $κ$-th gap in $t_i$ is a $(q_{(i,κ)}, q_{(i,κ)}'; j_{(i,κ)}, β_{(i,κ)}, β_{(i,κ)}')$-gap.
Let $φ_T, ψ_T: ℕ_+ × ℕ_+ → ℕ_+$ and $π_T: ℕ_+ × ℕ_+ → ℕ_+ × ℕ_+$ be partial functions such that for every $i ∈ [s]$ and $κ ∈ [\ell_i]$, the number $j_{(i,κ)}$ is the $φ_T(i, κ)$-th distinct number occurring in $J = j_{(1, 1)} ⋯ j_{(1, \lvert t_1 \rvert)} ⋯ j_{(s, 1)} ⋯ j_{(s, \lvert t_s \rvert)}$ when read left-to-right, $j_{(i,κ)}$ occurs for the $ψ_T(i, κ)$-th time at the element with index $(i, κ)$ in $J$, and $π_T(i, κ) = (φ_T(i, κ), ψ_T(i, κ))$.
Moreover let $m$ be the count of distinct numbers in $J$.
We call $T$ \emph{admissible} if
\begin{itemize}
	\item the $κ$-th run in $t_i$ ends with a $\push$-instruction whenever $φ_T(i, κ) = 1$,
	\item $β_{π_T^{-1}(κ', κ)}' = β_{π_T^{-1}(κ', κ+1)}$ for every $κ' ∈ [m]$ and $κ ∈ [\ell_{κ'}-1]$, and
	\item there are $q_1, \bar{q}_1, …, q_s, \bar{q}_s ∈ Q$ and $γ_0, …, γ_s ∈ Γ ∪ \{@\}$ such that for every $κ ∈ [s]$, we have that $t_κ$ is of type~$⟨q_κ, \bar{q}_κ; γ_{κ-1}, γ_κ⟩$.
\end{itemize}
We then say that \emph{$T$ has~$(A; B_1, …, B_m)$}, denoted by $\type(T) = (A; B_1, …, B_m)$, where $A = ⟨q_1, \bar{q}_1, …, q_s, \bar{q}_s; γ_0, …, γ_s⟩$, and for every $κ' ∈ [m]$:
\begin{align*}
	B_{κ'} &= ⟨q_{π_T^{-1}(κ',1)}, q_{π_T^{-1}(κ',1)}', …, q_{π_T^{-1}(κ',\ell_{κ'})}, q_{π_T^{-1}(κ',\ell_{κ'})}'; \\*
				&\qquadβ_{π_T^{-1}(κ',1)}, β_{π_T^{-1}(κ',1)}', …, β_{π_T^{-1}(κ', \ell_{κ'})}'⟩\,\text{.}
\end{align*}
The set of admissible elements of $(Ω_ℳ^{\star})^*$ is denoted by $Ω_ℳ^{\star\star}$.

\begin{construction}\label{con:automaton_to_MCFG}
	Let $ℳ = (Q, \TS{Γ}, Σ, q_{\text{i}}, \{\underline{(ε, @)}\}, δ, Q_{\text{f}})$ be a cycle-free $k$-restricted TSA in stack normal form.
	Define the $k$-MCFG $G'(ℳ) = (N, Σ, I, R')$ where
		$N = \{A, B_1, …, B_m ∣ ⟨A; B_1, …, B_m⟩ ∈ \type(Ω_ℳ^{\star\star}) \}$,
		$I = \{ ⟨q_{\text{i}}, q; @, @⟩ ∣ q ∈ Q_{\text{f}} \}$, and
		$R'$ contains for every $T = (t_1, …, t_s) ∈ Ω_ℳ^{\star\star}$ the rule
			\( A → \big[ u_1, …, u_s \big](B_1, …, B_m)\)
			where $(A; B_1, …, B_m)$ is the type of $T$ and
			$u_κ = t_κ[x_{φ_T(κ,1)}^{ψ_T(κ,1)}, …, x_{φ_T(κ,\ell_κ)}^{ψ_T(κ,\ell_κ)}]$ for every $κ ∈ [s]$.
	Let $G(ℳ)$ be a $k$-MCFG recognising $\{⟦θ⟧ ∣ θ ∈ L(G'(ℳ))\}$.\footnote{The $k$-MCFG $G(ℳ)$ exists since $⟦⋅⟧$ is a homomorphism and $k$-MCFLs are closed under homomorphisms \cite[Thm.~3.9]{SekMatFujKas91}.}
\end{construction}

\begin{proposition}\label{lem:automaton_equals_MCFL}
	$L(ℳ) = L(G(ℳ))$ for every cycle-free $k$-restricted TSA $ℳ$ in stack normal form.
\end{proposition}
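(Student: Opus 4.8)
The plan is to factor the statement through the language of runs. By \cref{con:automaton_to_MCFG} the grammar $G(ℳ)$ generates the homomorphic image $⟦L(G'(ℳ))⟧$ of $L(G'(ℳ))$ under the yield homomorphism $⟦·⟧\colon δ^* → Σ^*$ (which sends each transition to the terminal it reads and each $ε$-transition to $ε$), whereas $L(ℳ) = \{⟦θ⟧ ∣ θ ∈ R_ℳ^{\text{v}}\}$ by definition. Hence it suffices to prove the purely combinatorial identity $L(G'(ℳ)) = R_ℳ^{\text{v}}$, where the words of $G'(ℳ)$ are read as strings over the transition alphabet $δ$. Before the two inclusions I would record that $G'(ℳ)$ is a genuine finite $k$-MCFG: cycle-freeness bounds the length of every $\stay$-run, since a run in $R_ℳ(q,q')|_{\stay}^{γ→γ'}$ longer than $\lvert Q\rvert \cdot (\lvert Γ\rvert+1)$ would repeat a (state, top-symbol) pair and thus contain a nonempty element of some $R_ℳ(q,q)|_{\stay}^{γ→γ}$; therefore $Ω_ℳ^{↑}$, $Ω_ℳ^{↓}$, $Ω_ℳ^{↓↑}$, and $Ω_ℳ^{\star\star}$ are finite. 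The $k$-restriction bounds every non-terminal's sort by $k$ (a child position is entered from below at most $k$ times, and the root exactly once), and each (child, visit)-pair labels exactly one gap, so every composition function is linear and non-deleting.

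The heart of the argument is a characterisation of the tuples generated from each non-terminal $A = ⟨q_1, \bar q_1, …, q_s, \bar q_s; γ_0, …, γ_s⟩$. I would prove, by simultaneous induction on all non-terminals, that a tuple $(θ_1, …, θ_s)$ is generated from $A$ in $G'(ℳ)$ if and only if there is a position $ρ ∈ ℕ_+^*$ and tree stacks so that, for each $i ∈ [s]$, the run $θ_i$ leads from state $q_i$ with top symbol $γ_{i-1}$ at $ρ$ to state $\bar q_i$ with top symbol $γ_i$ at $ρ$ without ever descending below $ρ$, and the symbol stored at $ρ$ is left untouched between $θ_i$ and $θ_{i+1}$ (so that the $γ_i$ are consistent). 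This is precisely the informal reading of the non-terminals given after the construction.

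For the soundness direction ($⊆$) I would induct on derivation height: a rule arises from an admissible $T = (t_1, …, t_s)$ with children $B_1, …, B_m$, and plugging the child tuples supplied by the induction hypothesis into the gaps of the $t_κ$ yields legal runs, because the three admissibility conditions are exactly what is needed — the push-condition creates each child position $ρj$ once and re-enters it by $\up_j$ afterwards, the equalities $β_{π_T^{-1}(κ',κ)}' = β_{π_T^{-1}(κ',κ+1)}$ make the child symbols persist across successive visits, and the type condition matches source and target states together with the parent symbols across every gap. For completeness ($⊇$) I would take a tuple of runs of the asserted form at some $ρ$ and decompose each $θ_i$ at the child positions $ρ1, ρ2, …$ directly above $ρ$: the maximal infixes during which the pointer stays weakly above some $ρj$ are the child-excursions (the gaps), and deleting them leaves the $\stay$/$\push$/$\up$/$\down$ skeleton $t_i$ at $ρ$. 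Grouping the excursions by child and ordering them by visit gives, for each $ρj$, a tuple of runs that again has the asserted form (they never descend below $ρj$, and the child symbol persists between visits because nothing touches $ρj$ while the pointer is elsewhere); by $k$-restriction each such tuple has length at most $k$, and an induction on the number of $\push$/$\up$ steps — which strictly decreases on passing to a child — supplies the child derivations. Reassembling $(t_1, …, t_s)$ and checking admissibility and type produces the required rule.

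Finally I would specialise to the initial non-terminals $⟨q_{\text{i}}, q; @, @⟩$ with $q ∈ Q_{\text{f}}$, which have sort $1$: by the invariant their generated words are exactly the runs from $(q_{\text{i}}, @)$ to $(q, @)$ at the root that never leave the subtree above $ε$. Since the root symbol can never be overwritten ($\set$ forbids $ρ = ε$) and $ε$ is never entered from below, and since $ℳ$ is in stack normal form so that every valid run ends at $ε$ in a final state, these are precisely the valid runs of $ℳ$; this yields $L(G'(ℳ)) = R_ℳ^{\text{v}}$ and hence the claim. I expect the main obstacle to be the completeness direction of the invariant: making the decomposition of an arbitrary run into parent skeletons and child-excursions rigorous, and verifying that the bookkeeping functions $φ_T$, $ψ_T$, $π_T$ recovered from it meet every admissibility condition — in particular that the reconstructed gaps carry mutually consistent $β$-symbols and states. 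Choosing the induction measure (the count of $\push$/$\up$ steps) carefully is what keeps this step from becoming circular.
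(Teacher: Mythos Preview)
Your proposal is correct and follows the same route as the paper: reduce to $L(G'(ℳ)) = R_ℳ^{\text{v}}$ and argue by induction, then push through the homomorphism $⟦\cdot⟧$ to obtain $L(G(ℳ)) = L(ℳ)$. The paper's own proof is a two-sentence sketch (``We can show by induction that $G'(ℳ)$ generates exactly the valid runs of $ℳ$. Our claim then follows from the definition of language of $G(ℳ)$.''), so your version is strictly more detailed: you make the induction invariant on non-terminals explicit, spell out how cycle-freeness yields finiteness of $Ω_ℳ^{\star\star}$ and how $k$-restrictedness bounds the fan-out, and give the decomposition-by-child-excursions argument for the completeness direction that the paper leaves implicit.
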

\begin{proof}
	We can show by induction that $G'(ℳ)$ generates exactly the valid runs of~$ℳ$.
	Our claim then follows from the definition of language of $G(ℳ)$.
\end{proof}

\subsection{The main theorem}

\begin{theorem}\label{thm:equivalence-MCFL-rTSL}
	Let $L ⊆ Σ^*$ and $k ∈ ℕ_+$.
	The following are equivalent:
	\begin{enumerate}
		\item\label{item:thm:kMCFG} There is a $k$-MCFG $G$ with $L = L(G)$.
		\item\label{item:thm:kAutomaton} There is a $k$-restricted tree stack automaton $ℳ$ with $L = L(ℳ)$.
	\end{enumerate}
\end{theorem}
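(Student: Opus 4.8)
The plan is to prove the two implications separately, in each case assembling the constructions and lemmas established above.

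For the implication from \ref{item:thm:kMCFG} to \ref{item:thm:kAutomaton} I would start from a $k$-MCFG $G$ with $L = L(G)$. Since \cref{prop:PMCFL_toAutomaton} requires all non-terminals of $G$ to be productive, I would first discard every non-productive non-terminal together with all rules mentioning it; this standard elimination preserves the generated language, and because it only removes rules it leaves the grammar a $k$-MCFG. Applying \cref{con:PMCFL-toAutomaton} to the resulting grammar yields a TSA $ℳ(G)$, which is $k$-restricted by \cref{lem:MCFL_Automaton_restricted} and satisfies $L(ℳ(G)) = L(G) = L$ by \cref{prop:PMCFL_toAutomaton}. Hence $ℳ(G)$ is the required automaton.

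For the converse, from \ref{item:thm:kAutomaton} to \ref{item:thm:kMCFG}, I would start from a $k$-restricted TSA $ℳ$ with $L = L(ℳ)$ and bring it into the form demanded by \cref{con:automaton_to_MCFG}, namely cycle-free \emph{and} in stack normal form. By \cref{lem:cycle-free} there is an equivalent cycle-free $k$-restricted TSA, and by \cref{lem:stack-normal-form} that automaton in turn has an equivalent $k$-restricted TSA $ℳ'$ in stack normal form. Crucially, the stack-normal-form construction preserves cycle-freeness: the transitions it adds lead only from the original final states into the two fresh states $q_{\down}$ and $q_{\text{f}}$, and no transition returns from these states to the original state set, so no new loop consisting solely of $\id$- and $\set$-instructions at a fixed stack position can arise. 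Thus $ℳ'$ is simultaneously cycle-free, $k$-restricted, and in stack normal form. Now \cref{con:automaton_to_MCFG} produces a $k$-MCFG $G(ℳ')$, and \cref{lem:automaton_equals_MCFL} gives $L(G(ℳ')) = L(ℳ') = L$, completing this direction.

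I expect the only genuinely delicate step to be securing both normal-form properties at once for the converse direction. The statements of \cref{lem:cycle-free,lem:stack-normal-form} are independent, so one must fix an order and verify that the second normalisation does not undo the first. Applying cycle-freeness first and stack normal form second works exactly because the added stack-normal-form transitions are one-directional into a terminal gadget; the opposite order would instead force me to check that the loop-unrolling of \cref{lem:cycle-free} leaves the stack pointer at $ε$ on entering a final state, which is far less transparent. Everything else in the argument is a direct assembly of the previously proved statements.
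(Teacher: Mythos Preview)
Your proposal is correct and follows essentially the same route as the paper, which simply cites \cref{lem:MCFL_Automaton_restricted,prop:PMCFL_toAutomaton} for one direction and \cref{lem:cycle-free,lem:stack-normal-form,lem:automaton_equals_MCFL} for the other. You are in fact more careful than the paper on two points it leaves implicit: the need to first remove non-productive non-terminals so that \cref{prop:PMCFL_toAutomaton} applies, and the observation that the stack-normal-form construction of \cref{lem:stack-normal-form} does not destroy cycle-freeness (indeed, the only self-loop it introduces is on $q_{\down}$ via a $\down$-instruction, hence not a $\stay$-run).
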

\begin{proof}
	We get the implication (\ref{item:thm:kMCFG} $⟹$ \ref{item:thm:kAutomaton}) from \cref{lem:MCFL_Automaton_restricted,prop:PMCFL_toAutomaton} and the implication (\ref{item:thm:kAutomaton} $⟹$ \ref{item:thm:kMCFG}) from \cref{lem:cycle-free,lem:stack-normal-form,lem:automaton_equals_MCFL}.
\end{proof}


\section{Conclusion}

The automata characterisation of multiple context-free languages presented in this paper is achieved through tree stack automata that possess, in addition to the usual finite state control, the ability to manipulate a tree-shaped stack; tree stack automata are then restricted by bounding the number of times that the stack pointer enters any position of the stack from below (cf. \cref{sec:tree_stack_operator}).
The proofs for the inclusions of multiple context-free languages in restricted tree stack languages and vice versa are both constructive; the former even works for parallel multiple context-free grammars, although the resulting automaton may then no longer be restricted (cf. \cref{sec:automata_equal_MCFG}).
\Cref{thm:equivalence-MCFL-rTSL} closes a gap in formal language theory open since the introduction of MCFGs \cite{SekMatFujKas91}. 
The proof allows for the easy implementation of a parser for parallel multiple context-free grammars.

\bibliographystyle{alpha-doi}
\bibliography{references}

\end{document}